\documentclass[11pt]{article}
\usepackage[preprint,eandd]{neurips_2026}
\usepackage{common}

\usepackage[utf8]{inputenc}
\usepackage[T1]{fontenc}
\usepackage{amsmath,amssymb,amsfonts}
\usepackage{amsthm}
\newtheorem{proposition}{Proposition}

\usepackage{graphicx}
\usepackage{subcaption}
\usepackage{booktabs}
\usepackage{hyperref}
\usepackage{multirow}
\usepackage{tabularx}
\usepackage{natbib}
\usepackage{placeins}  
\usepackage{wrapfig}   
\usepackage{tikz}
\usetikzlibrary{matrix,positioning,arrows.meta,fit,backgrounds}
\usepackage{xspace}
\usepackage{comment}

\newcommand{\contractbench}{\textsc{ContractBench}\xspace}
\newcommand{\sha}{\operatorname{SHA\text{-}256}}

\newcommand{\repoUrl}{https://github.com/JeremyJC67/contractbench}
\newcommand{\datasetUrl}{https://huggingface.co/datasets/nips26-anon-author/contractbench}

\newcommand{\numHarborTasks}{33}
\newcommand{\numTotalTasks}{33}
\newcommand{\NumModels}{38}            
\newcommand{\NumModelFamilies}{15}     

\newcommand{\numFailureLabels}{15}
\newcommand{\numSeeds}{10}

\IfFileExists{macros/results.tex}{

}{}

\title{\contractbench: Can LLM Agents Preserve Observation Contracts?}

\author{%
  Jicheng Wang$^{1}$\quad Yifeng He$^{1\,\dagger}$\quad Zili Wang$^{1\,\dagger}$ \\[1pt]
  Hanwen Xing$^{2}$\quad Arkaprava De\quad Hao Chen$^{3\,*}$ \\[2pt]
  $^{1}$University of California, Davis \quad $^{2}$University of Southern California \\
  $^{3}$University of Hong Kong \\[2pt]
  \texttt{\{jicwang, yfhe, zliwang\}@ucdavis.edu,\ harvenx01@gmail.com,} \\
  \texttt{arkaprava08@gmail.com,\ chenho@hku.hk} \\[2pt]
  {\small $^{*}$Corresponding author. \quad $^{\dagger}$Equal contribution.}
}

\begin{document}
\maketitle

\begin{abstract}
	Tool-augmented LLM agents call APIs whose intermediate outputs, such as presigned URLs, session tokens, and OAuth state parameters, are \emph{observation contracts}: artifacts whose later use is constrained by the external system that produced them.
	We show that observation contract compliance (preserving the temporal validity and byte-level integrity) is an emergent, regression-prone capability: it is neither guaranteed by general tool-use ability nor consistently improved by larger or newer models.
	To measure this, we introduce
	\contractbench, a benchmark of \numHarborTasks{} dual-axis tasks that probe two orthogonal failure modes no existing benchmark evaluates:
	validity failures (using an artifact after expiry) and integrity failures (corrupting an artifact's bytes through the observation-to-action pipeline).
	Our evaluation is deterministic and programmatic,
	with a virtual clock controls time and SHA-256 hashes verify byte integrity.
	We assign the outcomes failure labels drawn from real-world API specifications.
	We evaluate \NumModels{} models and report four findings:
	\begin{enumerate*}[label=(\roman*)]
		\item no evaluated model clears 80\%, with Claude-Opus-4.6 leading at 77.8\%,
		revealing that current frontier models still fail to comply with observation contracts.
		\item a sharp \emph{within-family capability cliff} in Qwen~3.5 between 4B (0\%) and 9B (56.6\%), smoothing to 70.7\% at 397B-A17B\@: what emerges across the cliff is mid-trajectory restraint, not tool-call competence;
		\item \emph{non-monotonic scaling} across GPT-5 family: agentic post-training can erode compliance through sycophancy-driven regression;
		\item and our failure taxonomy works as an actionable in-context reward signal,
		yielding $+7.1$\,pp on 42 paired GPT-5.1 failures.
	\end{enumerate*}
\end{abstract}

\section{Introduction}
\label{sec:intro}

Large language model (LLM) agents are increasingly deployed as autonomous assistants by personal users and enterprises~\citep{schick2023toolformer,yao2023react}.
To reach a user-specified goal, an LLM agent calls external networked APIs,
navigates websites, executes shell commands, and reads or writes files,
observing the outcome of each step before choosing the next.
These agents are also growing more autonomous and longer-lived, with recent systems completing everyday online tasks across dozens of live platforms with minimal human oversight~\citep{zhang2026clawbench}.
In production, they already resolve software-engineering issues~\citep{yang2024sweagent}, handle multi-turn customer-service dialogues~\citep{yao2025taubench}, and carry out complex terminal tasks in scientific computing and cybersecurity~\citep{merrill2026terminalbench}.
In this setting, a wrong action can break a workflow, corrupt a database, or cause an irreversible side effect~\citep{he2025security}.
Executing reliably matters as much as understanding the task.

Several benchmarks now ground agent evaluation in real APIs, codebases, and user interactions: web navigation~\citep{zhou2024webarena}, software engineering~\citep{jimenez2024swebench,merrill2026terminalbench}, general tool use~\citep{qin2024toolllm,liu2024agentbench}, user interaction under domain-specific policies~\citep{yao2025taubench}, and multi-constraint planning~\citep{xie2024travelplanner}.
In these benchmarks, the intermediate steps of a multi-step task do not impose constraints on the steps that follow: a tool output is information for the agent's next decision, and the agent can paraphrase, summarize, or even lose it without penalty so long as the workflow reaches the right end state.
Real API workflows behave differently. The output of one step often dictates exactly how the agent must execute the next step.
For example, the agent must relay a signed URL byte-for-byte, use a token before it expires, and round-trip an OAuth state parameter unmodified.
Current benchmarks do not test whether the agent respects these inter-step constraints. Yet this is a primary failure mode in deployment: a single intermediate-step violation invalidates the whole task flow.

Previous AI agent evaluations have overlooked this setting, but it is routine in enterprise API workflows: a tool returns an artifact, and the external system that produced it constrains its later use.
We call such a tool-returned artifact an \emph{observation contract}.
Consider a presigned S3 URL returned by a cloud-storage API.
The URL works only before its expiration time and only if the agent preserves its signature-bound query string byte-for-byte~\citep{aws2026presignedurls}.
An agent can understand the user's goal, call the right API, and still fail the workflow by using the URL too late, truncating it, re-encoding reserved characters, or reordering signed parameters.
This failure pattern is not unique to presigned URLs. OAuth state parameters, session tokens, webhooks authenticated with HMAC (hash-based message authentication code), and rate-limit windows all impose the same two constraints: \emph{temporal validity} (use the artifact before it expires or becomes stale) and \emph{byte-level integrity} (relay the artifact without alteration).
The two constraints are orthogonal. Refreshing an expired artifact repairs validity but does not prevent in-transit corruption of every fresh copy; preserving bytes intact does not teach the agent to respect deadlines or version freshness.

\begin{wraptable}[13]{r}{0.55\textwidth}
	\vspace{-10pt}
	\centering
	\caption{%
		Agent benchmark landscape.
		``V'': temporal validity;
		``I'': byte-level integrity;
		``P'': programmatic evaluation (no human or LLM judge).
	}
	\label{tab:landscape}
	\footnotesize
	\begin{tabular}{@{}lccc@{}}
	\toprule
	\textbf{Benchmark}                               & \textbf{V} & \textbf{I} & \textbf{P} \\
	\midrule
	SWE-bench~\citep{jimenez2024swebench}            & \xmark     & \xmark     & \cmark     \\
	ToolBench~\citep{qin2024toolllm}                 & \xmark     & \xmark     & Partial    \\
	AgentBench~\citep{liu2024agentbench}             & \xmark     & \xmark     & \cmark     \\
	TravelPlanner~\citep{xie2024travelplanner}       & \xmark     & \xmark     & Partial    \\
	$\tau$-bench~\citep{yao2025taubench}             & \xmark     & \xmark     & \cmark     \\
	Terminal-Bench~\citep{merrill2026terminalbench}  & \xmark     & \xmark     & \cmark     \\
	TicToc~\citep{cheng2026llmagentstemporallyblind} & Partial    & \xmark     & \cmark     \\
	\midrule
	\contractbench (Ours)                            & \cmark     & \cmark     & \cmark     \\
	\bottomrule
\end{tabular}

	\vspace{-0.8em}
\end{wraptable}

Existing benchmarks partly address the temporal axis and largely overlook the integrity axis.
\autoref{tab:landscape} surveys the landscape: several include partial temporal constraints such as time limits or session windows, and recent work on temporal blindness shows that agents can mishandle stale information~\citep{cheng2026llmagentstemporallyblind}.
None, however, jointly tests temporal validity and byte-level artifact integrity.
In practice, expired tokens, mutated URLs, stale ETags (HTTP entity tags), and signature mismatches break otherwise correct workflows.

To address this gap, we introduce \contractbench{},
a benchmark for measuring observation-contract compliance in LLM agents.
\contractbench{} contains \numHarborTasks{} tasks drawn from real API contract patterns --- 
presigned URLs, OAuth state parameters, signed requests, HMAC-protected webhooks, rate-limit windows,
and multi-step token workflows --- 
arranged across the two orthogonal axes and eight contract domains, shown in \autoref{fig:overview}.
Every task runs under a deterministic virtual clock, validates artifacts programmatically, and assigns each failed episode one primary label from a taxonomy of \numFailureLabels{} failure modes that map directly onto either the validity or the integrity axis.
The benchmark measures both whether an agent reaches a goal and whether it preserves the contracts attached to intermediate tool outputs.
Across \NumModels{} model variants from \NumModelFamilies{} families, observation-contract compliance is unsolved at the frontier (best model 77.8\,\% on the full \numHarborTasks-task suite, $n{=}99$) and behaves non-monotonically with scale and post-training: a sharp within-family capability cliff in Qwen~3.5 and a V-shape across GPT-5 generations at near-constant parameter count.
Our contributions are:
\begin{enumerate}[leftmargin=*]
	\item \textbf{Observation contracts.} We formalize observation contracts as tool-returned artifacts governed by two orthogonal constraints, temporal validity and byte-level integrity, and define a taxonomy of \numFailureLabels{} failure modes for diagnosing contract failures.
	\item \textbf{\contractbench.} We build a deterministic benchmark of \numHarborTasks{} agent tasks drawn from real API contract patterns, covering validity-dominant, integrity-dominant, and dual-axis pressures. Each task has a programmatic validator, a reproducible virtual-clock execution model, and a reference solution that establishes solvability.
	\item \textbf{Empirical findings.} Across \NumModels{} model variants spanning \NumModelFamilies{} families, contract compliance is neither guaranteed by general tool-use ability nor consistently improved by larger or newer models. We observe a sharp within-family capability cliff in Qwen~3.5 (0\% at ${\leq}4$B to 70.7\% at 397B-A17B), a GPT-version regression and recovery (23\% $\to$ 71\% $\to$ 49\% $\to$ 75\%, full 33-task suite, $n{=}99$), and a structured failure hierarchy from protocol entry to long-horizon constraint tracking.
\end{enumerate}

\begin{figure}[!htbp]
	\centering
	\includegraphics[width=\textwidth]{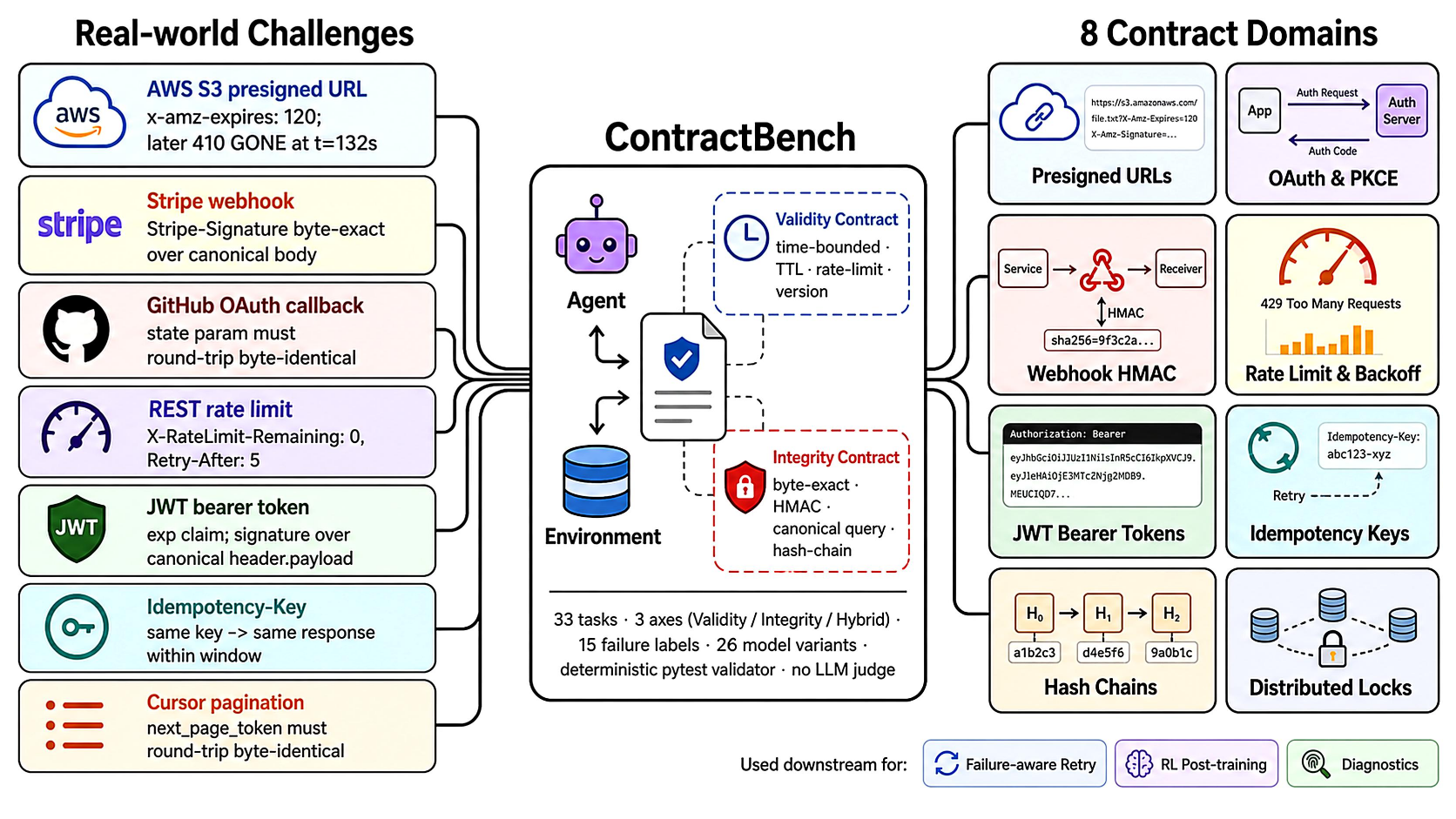}
	\caption{\contractbench{} is the first deterministic benchmark for LLM-agent observation-contract compliance, spanning \numHarborTasks{} tasks across two orthogonal axes (temporal validity and byte-level integrity) and eight real-world API contract domains.}
	\label{fig:overview}
\end{figure}

\section{Observation Contracts}
\label{sec:problem}

Production agents repeatedly hit the same class of failure:
a tool returns an \emph{artifact} (an intermediate value such as a presigned URL, a signed token, or an OAuth state parameter that the agent must reuse in a later step) that carries a time-limit and a byte-level integrity rule,
and a downstream tool call rejects it because one of those rules was violated somewhere along the way~\citep{aws2026presignedurls}.
We call this an \emph{observation contract} failure,
which we adopt from the enforcement clauses of widely deployed API specifications (AWS SigV4, OAuth~2.0, JWT, Stripe webhooks, RFC~6585), where every contract clause we observed reduces to a temporal check, a byte-level check, or a conjunction of the two.
Despite its prevalence in deployed agent stacks, the agent-evaluation literature has not formalized this class or measured agents against it directly.
We provide a formal definition of the problem in this section,
and introduce \contractbench in \autoref{sec:benchmark}.


Let $\mathcal{O}$ be an artifact space (\eg, the set of UTF-8 strings), $\mathcal{T} \subseteq \mathbb{R}_{\geq 0}$ a virtual time domain, and $\mathcal{H}$ a digest space (in our implementation, $\mathcal{H} = \{0,1\}^{256}$ via SHA-256).

\begin{definition}[Observation contract]\label{def:contract}
	An \emph{observation contract} is a 4-tuple
	$C \;=\; (o,\; t_\text{issue},\; \tau,\; \pi)$,
	where $o \in \mathcal{O}$ is the artifact issued at time $t_\text{issue} \in \mathcal{T}$, $\tau \in \mathbb{R}_{>0}$ is its time-to-live, and $\pi : \mathcal{O} \to \{0,1\}$ is an \emph{integrity predicate}. The induced \emph{validity window} is $W(C) := [\,t_\text{issue},\; t_\text{issue} + \tau\,) \subseteq \mathcal{T}$.
\end{definition}

In our implementation $\pi$ is the SHA-256 equality check $\pi(o') = \mathbbm{1}[\,\sha(o') = h(o)\,]$, but the framework admits any deterministic predicate (HMAC, ETag equality, JWT signature verification).

\begin{definition}[Satisfaction]\label{def:sat}
	A submission pair $(o',\, t') \in \mathcal{O} \times \mathcal{T}$ satisfies a contract $C$ if and only if
	\[
		\underbrace{t' \in W(C)}_{\text{validity}}
		\quad\text{and}\quad
		\underbrace{\pi(o') = 1}_{\text{integrity}}.
	\]
	We write $\operatorname{Sat}(C, o', t') \in \{0,1\}$ for the satisfaction indicator.
\end{definition}

\begin{definition}[Failure modes]\label{def:fail}
	A submission $(o', t')$ is a \emph{validity failure} on $C$ when $t' \notin W(C)$, and an \emph{integrity failure} on $C$ when $\pi(o') = 0$. The two modes are logically independent: a submission may exhibit one, both, or neither.
\end{definition}

\begin{definition}[Compliance]\label{def:trace-compliance}
	A trace of $n$ contracts $\{C_i\}_{i=1}^n$ with submissions $\{(o'_i, t'_i)\}_{i=1}^n$ is contract-compliant if and only if
	\[
		\operatorname{Compliant}(\{C_i\}, \{(o'_i, t'_i)\}) \;:=\; \prod_{i=1}^{n} \operatorname{Sat}(C_i, o'_i, t'_i) \;=\; 1.
	\]
	Equivalently, the trace is non-compliant whenever $\operatorname{Sat}(C_i, o'_i, t'_i) = 0$ for some $i \in \{1, \dots, n\}$.
\end{definition}


\begin{proposition}[Orthogonality]\label{prop:orthogonal}
	For any contract $C$ with non-empty validity window and non-trivial integrity predicate, all four cells of the $2{\times}2$ validity-by-integrity partition over submissions are non-empty (\autoref{fig:concept}). Consequently, fixing one axis (\eg, always submitting at $t' \in W$) does not fix the other. (Proof sketch in \autoref{app:orthogonal-proof}.)
\end{proposition}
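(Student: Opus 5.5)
Because the validity condition depends only on the submission time $t'$ and the integrity condition only on the artifact $o'$, the set of submissions $\mathcal{O}\times\mathcal{T}$ splits into a product. I will show that each factor splits into a nonempty "pass" part and a nonempty "fail" part. Cartesian products of nonempty sets are nonempty, so all four cells will be nonempty.

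\textbf{Step 1: interpret the hypotheses.} I read ``non-trivial integrity predicate'' as saying that $\pi$ is not constant on $\mathcal{O}$. This gives some $o^+$ with $\pi(o^+)=1$ and some $o^-$ with $\pi(o^-)=0$. For the SHA-256 instantiation, $o^+=o$ works, and $o^-$ can be any string with a different digest; one exists because $\mathcal{O}$ is infinite while $\mathcal{H}$ is finite.

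For the time factor, $W(C)=[t_\text{issue},t_\text{issue}+\tau)$ is non-empty since $\tau>0$, so I take $t^+=t_\text{issue}$. For a failing time I take $t^-=t_\text{issue}+\tau$. This is the step that needs care: the argument requires $\mathcal{T}\setminus W(C)\neq\emptyset$. I will make this explicit as part of the hypothesis, or assume $\mathcal{T}$ is unbounded above, as for $\mathcal{T}=\mathbb{R}_{\ge0}$ or a virtual clock that can advance past expiry. I expect this degenerate-domain issue to be the only real obstacle. It will be handled by a one-line remark that ``non-empty validity window'' is read as $\emptyset\neq W(C)\subsetneq\mathcal{T}$.

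\textbf{Step 2: exhibit the four cells.} By \autoref{def:sat} and \autoref{def:fail}, the four submissions are:
\begin{itemize}
\item $(o^+,t^+)$, which satisfies $C$;
\item $(o^-,t^+)$, which is an integrity failure only;
\item $(o^+,t^-)$, which is a validity failure only;
\item $(o^-,t^-)$, which fails on both axes.
\end{itemize}
Each cell therefore has a witness.

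\textbf{Step 3: the consequence.} Restricting attention to $t'\in W(C)$ leaves the cell containing $(o^-,t^+)$, so fixing validity does not guarantee $\operatorname{Sat}=1$. Symmetrically, fixing integrity with $\pi(o')=1$ leaves $(o^+,t^-)$, so fixing integrity does not guarantee validity either. Through \autoref{def:trace-compliance}, either failure makes the whole trace non-compliant, which gives the practical reading of the proposition.
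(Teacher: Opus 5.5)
Your proof is correct and is essentially the paper's own argument: choose one passing and one failing time, one passing and one failing artifact, and take the four combinations. You are actually more careful than the paper's sketch, which silently assumes both $\mathcal{T}\setminus W(C)\neq\emptyset$ and $\pi(o)=1$; one small slip is your SHA-256 aside, since an infinite domain mapping into a finite digest space does not by itself give a string whose digest differs from $h(o)$ (a constant map is possible), but this does not matter because non-triviality of $\pi$ is a hypothesis.
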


The proposition is a structural claim: the two axes can vary independently in principle. The substantive claim is that they vary independently in practice, in real LLM agent stacks. Refreshing an expired URL repairs validity but does not stop a downstream pipeline from corrupting the fresh copy through truncation, re-encoding, or parameter reordering. Conversely, preserving bytes intact does not teach an agent to respect deadlines or rate-limit windows. \autoref{sec:experiments} confirms this empirically: models near the frontier on one axis collapse on the other.
This independence is what makes the problem worth a dedicated benchmark: a single-axis evaluation cannot expose it. The next section describes how \contractbench probes both axes, separately and jointly.

\section{\contractbench{}}
\label{sec:benchmark}

We introduce \contractbench, 
a benchmark that instantiates the observation-contract problem (\autoref{sec:problem}) as a suite of programmatically evaluated tasks.
\contractbench contains \numHarborTasks{} tasks, 
all generated from parameterized templates, evaluated by deterministic validators, and executed under a virtual clock for full reproducibility. 
This section describes the dual-axis task design (\autoref{sec:tasks}) following \autoref{def:sat},
and the deterministic evaluation protocol (\autoref{sec:protocol}).

\subsection{Building Tasks that Probe Both Axes Simultaneously}
\label{sec:tasks}

\begin{wrapfigure}[20]{r}{0.55\textwidth}
	\centering
	\vspace{-10pt}
	\includegraphics[width=0.53\textwidth]{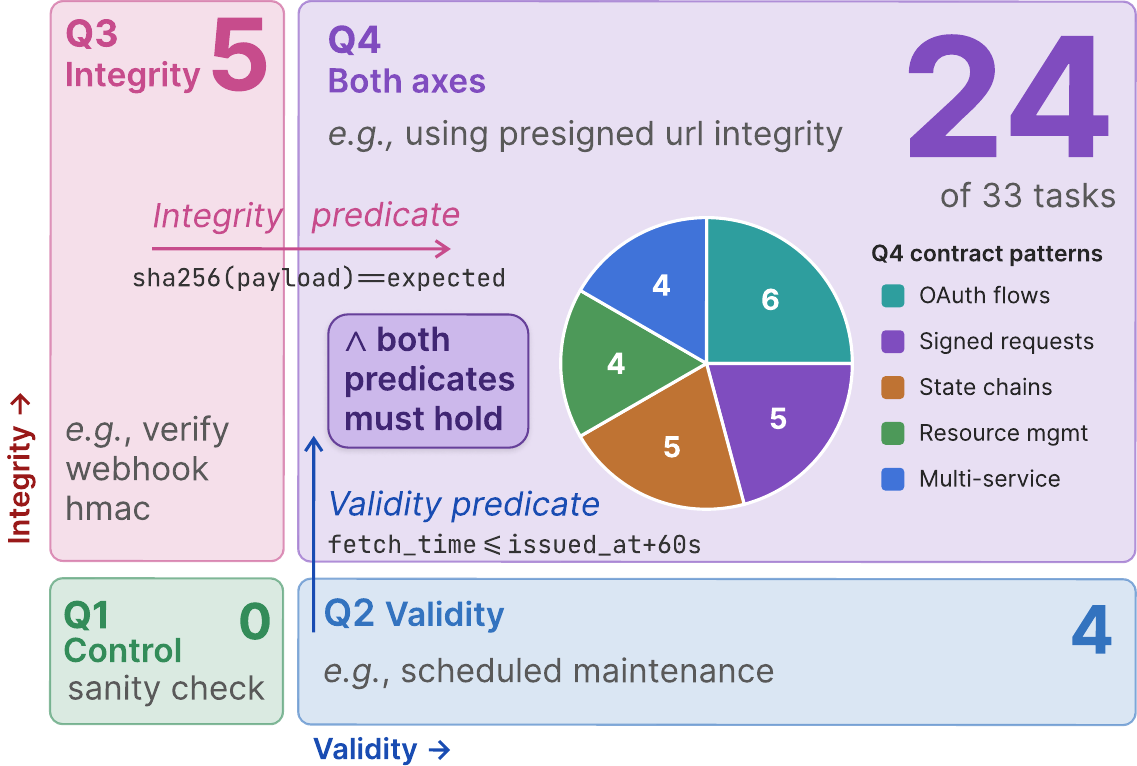}
	\caption{\textbf{Two orthogonal predicates define an observation contract.} Integrity ($\sha(\text{payload})\!=\!\text{expected}$) and validity ($t_\text{fetch}\!\leq\!t_\text{issue}{+}\tau$) must \emph{both} hold (\autoref{def:sat}); the four quadrants instantiate \autoref{prop:orthogonal}. Quadrant area encodes \contractbench's task distribution.}
	\label{fig:concept}
\end{wrapfigure}

\paragraph{Two-level taxonomy.}
\label{sec:task_families}
We organize tasks using a two-level taxonomy, as depicted in \autoref{fig:concept}.
The top level groups tasks by primary pressure quadrant following the orthogonality of \autoref{prop:orthogonal}:
\begin{enumerate*}[label=\textbf{Q\arabic*}]
\item (low validity, low integrity) is a sanity check;
\item isolates temporal planning;
\item isolates byte preservation; and
\item stresses both axes simultaneously.
\end{enumerate*} The Q4-heavy distribution (24 of \numHarborTasks{} tasks) reflects the real-world prevalence of artifacts that are both time-limited and signature-bound.
The second level (visible inside Q4) groups dual-axis tasks by \emph{contract pattern}: the real-world API pattern being tested (OAuth / auth flows, signed requests, state chains, resource management, multi-service flows). Per-task counts and one example task per pattern are shown in the figure; the full per-task list is in \appautoref{app:tasks}.

\begin{table}[t]
	\centering
	\caption{Failure taxonomy (\numFailureLabels{} labels). Each failed episode receives one primary label, aggregated from the underlying event list using the most-severe rule (\appautoref{app:rl_reward}).}
	\label{tab:taxonomy}
{\scriptsize
	\setlength{\tabcolsep}{4pt}
	\renewcommand{\arraystretch}{1.10}
	\begin{tabular*}{\textwidth}{@{\extracolsep{\fill}}llp{10.5cm}@{}}
		\toprule
		\textbf{Group} & \textbf{Label}                  & \textbf{Description}                                                                                \\
		\midrule
		Validity       & \texttt{EXPIRED\_BEFORE\_USE}   & Resource TTL expired before agent used it~\citep{rfc6750}.                                          \\
		& \texttt{RATE\_LIMITED}          & Exceeded API rate-limit window (HTTP 429)~\citep{rfc6585}.                                          \\
		& \texttt{SCHEDULED\_UNAVAILABLE} & Called service during scheduled downtime (HTTP 503)~\citep{rfc9110}.                                \\
		& \texttt{VERSION\_CONFLICT}      & Submitted write with stale ETag / If-Match~\citep{rfc7232}.                                         \\
		\midrule
		Integrity      & \texttt{WRONG\_VALUE}           & Required field present but value incorrect (benchmark-internal predicate; see \autoref{sec:tasks}). \\
		& \texttt{MISSING\_CONSTRAINT}    & Required protocol field omitted from request~\citep[\S4.1.1]{rfc6749}.                              \\
		& \texttt{MUTATED\_TOKEN}         & Bytes altered by URI normalization/re-encoding /reordering~\citep[\S6]{rfc3986}.                     \\
		& \texttt{SIGNATURE\_MISMATCH}    & Request signature failed verification~\citep{rfc9421,aws2024sigv4}.                                 \\
		& \texttt{COMPENSATION\_FAILURE}  & Saga compensation step did not undo prior write~\citep{garciamolina1987sagas}.                      \\
		& \texttt{WRONG\_HASH}            & Content digest computed incorrectly~\citep{nist2015fips180,rfc9530}.                                \\
		& \texttt{SHORTCUT\_TAKEN}        & Adversarial shortcut endpoint used instead of canonical flow~\citep{krakovna2020specification}.     \\
		& \texttt{MISSING\_TOKEN}         & Required authentication token absent at submission~\citep[\S4]{rfc6749}.                            \\
		& \texttt{BACKOFF\_VIOLATION}     & Retried before \texttt{Retry-After} window elapsed~\citep[\S10.2.3]{rfc9110}.                       \\
		\midrule
		Meta           & \texttt{SUCCESS}                & Contract satisfied on both axes.                                                                    \\
		& \texttt{OTHER}                  & Unclassified or low-frequency variant (catch-all for emitted strings outside the taxonomy).         \\
		\bottomrule
	\end{tabular*}
}

\end{table}

\paragraph{Task structure.}
\label{sec:files}
Each \contractbench{} task consists of four files: 
\begin{enumerate*}
\item a TOML metadata header that fixes difficulty, dual-axis category, and timing budgets; 
\item a Markdown instruction visible to the agent; 
\item a FastAPI server that emits one of \numFailureLabels{} deterministic failure labels per HTTP request via a shared \texttt{append\_log()} call running over a virtual clock; 
\item a \texttt{pytest} validator that consumes the request log and computes the reward. 
\end{enumerate*}
The agent only sees the instruction and the server's HTTP responses; 
the metadata header and validator are hidden, eliminating test-set leakage and LLM-as-judge ambiguity. 
We build \contractbench{} on Harbor~\citep{harborframework} following standard practice of agent evaluation~\citep{merrill2026terminalbench,li2026skillsbench}.
We provide an annotated example in \appautoref{fig:task_anatomy} to illustrate the task structure.

\paragraph{Failure taxonomy.}
\label{sec:taxonomy}
\contractbench{} instantiates the abstract validity/integrity axes as a \numFailureLabels-label taxonomy (\autoref{tab:taxonomy}), 
partitioned into four \emph{validity} failures (temporal), 
nine \emph{integrity} failures (byte-level), and two \emph{meta} labels. 
Each label in \autoref{tab:taxonomy} traces to a documented real-world API rule.
We provide the full per-task event-to-label mapping in \appautoref{app:rl_reward}. 
At runtime, each task server emits a chronological list of violation events into the request log; 
the validator aggregates these events into a single \emph{primary label} per episode using the most-severe-label rule, 
and protocol state events that appear in successful episodes (\eg, \texttt{CHALLENGE\_ISSUED}) are excluded from the failure distribution.


Two sources of difficulty stack in \contractbench{}. 
First, the path from API response, through the agent's context window, 
into the next tool call is not byte-preserving in production: 
tool-call interfaces truncate long inputs, HTTP libraries re-encode percent-escaped URLs, 
middleware re-sorts query parameters before signing, 
and rendered link text disagrees with the underlying \texttt{href}. 
These mutations are not artifacts we invent: they arise naturally from the way real agent stacks compose.
\contractbench{} reproduces five of them as deterministic toggles so each cause can be isolated,
and the full catalog and the failure label each one triggers is in \appautoref{app:path_mutations}. 
Furthermore, even without mutations, 
dual-axis tasks force the agent to plan around tight TTLs while preserving exact byte content.
\appautoref{app:examples} walks through one validity-heavy and one integrity-heavy task that exemplify the two pressures and the corresponding agent-side strategies.

\paragraph{Task generation and solvability.}
\label{sec:oracle}
Each task is instantiated from a parameterized template that fixes the contract type, 
the difficulty knobs (TTL duration, rate-limit window, token length), and the validator. 
A single \emph{seed} controls every random element (resource ordering, token content, timing jitter), 
so a given (template, seed) pair always produces the same task instance, 
and we evaluate each model on a fixed grid of \numSeeds{} seeds per task. 
Following the verification practice of Terminal-bench,
every \contractbench{} task ships with an \emph{oracle solution}
that issues the correct sequence of API calls under the same virtual clock as the agent.
We require each oracle to pass the validator before admitting the task; 
all \numHarborTasks{} oracles do so on every seed ($\numHarborTasks{}\times\numSeeds{}=330$ oracle episodes), 
so every task is provably solvable and the validator accepts the intended behavior. 
Oracle solutions serve as regression tests during task authoring and reproducibility checks. 
They are not exposed to the agent during model evaluation.

\subsection{Evaluation Protocol}\label{sec:protocol}

\contractbench{} uses deterministic, programmatic validators for reliability: no LLM-as-judge, no human raters.
The validator consumes the per-episode HTTP request log written by the task server and outputs the reward details:
a binary \texttt{success} flag, 
the most-severe failure label drawn from the taxonomy (\autoref{tab:taxonomy}), 
the failure detail, 
and trace metadata (\appautoref{app:schema}).

\paragraph{Metrics.}
Each model is evaluated at $k{=}3$ runs per task across \numHarborTasks{} tasks ($n{=}99$ episodes per model).
We report three quantities derived from the validator output.
\textbf{Success rate} (SR\,\%) is the fraction of the $n{=}99$ episodes with label \texttt{SUCCESS}.
\textbf{Per-task pass rate} is the mean reward across the $k{=}3$ runs of one model on one task.
\textbf{Failure-label distribution} is the distribution of primary labels over a model's failed episodes (counting convention in \appautoref{app:reproducibility}).

\paragraph{Success criteria.}
An episode is labeled \texttt{SUCCESS} if and only if its HTTP request log satisfies the contract per \autoref{def:sat}: every required observation is both temporally valid ($t_\text{fetch}\!\leq\!t_\text{issue}+\tau$) and byte-integral ($\sha(\text{payload})\!=\!\text{expected}$); any other event in the log triggers the most-severe failure label.
The validator reads HTTP-level events only, never agent prose or self-reported summaries, so an agent cannot bluff its way to success by narrating compliance it did not perform.

\section{Experiments}
\label{sec:experiments}


\paragraph{Models and settings.}
We evaluate \NumModels{} model variants across \NumModelFamilies{} families, grouped into three categories:
\begin{enumerate*}[label=(\roman*)]
	\item frontier proprietary models (Claude, GPT-5 series, and Gemini 2.5);
	\item open-source instruction-tuned models (Qwen 3.5, Qwen 2.5, MiniMax, Mistral / Ministral, Gemma, Llama, DeepSeek-R1) spanning a range of parameter counts and training recipes;
	\item Base (non-instruction-tuned) and smaller Instruct checkpoints, included to disentangle parameter count from post-training.
\end{enumerate*}
Per-model provider and infrastructure metadata, plus the full \NumModels-row leaderboard, are in \appautoref{tab:llm_results}.
Each model is run on the full \numHarborTasks-task \contractbench{} suite at $k{=}3$ rollouts per task ($n{=}99$ episodes per model), with temperature 0, a task-describing system prompt, and a 600-second per-episode wall-clock timeout.

\begin{wraptable}[24]{r}{0.4\textwidth}
	\vspace{-0.8em}
	\centering
	\caption{%
		The \contractbench{} leaderboard.
		$k{=}3$, $n{=}99$ episodes per model.
	}
	\label{tab:llm_results_frontier}
	\footnotesize
	\begin{tabular}{@{}lrr@{}}
	\toprule
	\textbf{Model}           & \textbf{Pass} & \textbf{SR\%} \\
	\midrule
	Claude Opus 4.6          & 77            & \textbf{77.8}                    \\
	Claude Sonnet 4.5        & 69            & 69.7                             \\
	\lightmidrule
	GPT-5.2                  & 74            & 74.8                             \\
	GPT-5                    & 70            & 70.7                             \\
	GPT-5.1                  & 48            & 48.5                             \\
	GPT-4o                   & 23            & 23.2                             \\
	\lightmidrule
	Gemini 2.5 Pro           & 51            & 51.5                             \\
	Gemini 2.5 Flash         & 19            & 19.2                             \\
	\midrule
	Qwen3.5-397B-A17B (MoE)  & 70            & 70.7                             \\
	Qwen3.5-27B              & 64            & 64.6                             \\
	Qwen3.5-9B               & 56            & 56.6                             \\
	Qwen2.5-72B-Instruct     & 23            & 23.2                             \\
	\lightmidrule
	MiniMax-M2.5             & 62            & 62.6                             \\
	MiniMax-M2.1             & 60            & 60.6                             \\
	MiniMax-M2               & 53            & 53.5                             \\
	\lightmidrule
	Mistral-Small-4 (MoE)    & 42            & 42.4                             \\
	Ministral-3-14B          & 28            & 28.3                             \\
	\lightmidrule
	Gemma-4-26B-A4B (MoE)    & 38            & 38.4                             \\
	Gemma-4-31B              & 37            & 37.4                             \\
	\lightmidrule
	Llama-3.3-70B-Instruct   & 7             & 7.1                              \\
	\bottomrule
\end{tabular}

	\vspace{-0.8em}
\end{wraptable}

\paragraph{Research questions.}
To probe how contract compliance varies with models, scale, post-training, and failure mode,
we ask four questions. 
\begin{enumerate*}[label=\textbf{RQ\arabic*}]
	\item how do frontier models perform on \contractbench{}? 
	\item how does compliance scale with parameters within the same training method, and does the within-family pattern reproduce across families?
	\item at approximately constant parameter count, can post-training updates change compliance, and if so, on which capability axes?
	\item does \contractbench{} reward signal lead to improvement of the agents?
\end{enumerate*}

\FloatBarrier
\subsection{Frontier Models on Observation Contracts}
\label{sec:frontier}

The \contractbench{} leaderboard (\autoref{tab:llm_results_frontier}) places Claude Opus 4.6 at the top (77.8\,\%, 77/99); no evaluated model clears 80\,\%, so roughly one in four episodes still violates the contract even for the top model (per-task ceiling in \autoref{sec:labels}).
Frontier proprietary scores span $19.2$\,\%--$77.8$\,\% (a 58.6-pp within-tier spread), and the best-open vs.\ best-proprietary gap is only $7.1$\,pp: Qwen3.5-397B-A17B matches GPT-5 at $70.7$\,\%.
Rank on \contractbench{} does not track rank on general-purpose benchmarks --- Gemini 2.5 Flash and GPT-5.1 sit below several open-source SOTA models, while Llama-3.3-70B ($7.1$\,\%) trails far behind same-class proprietary models.
The GPT-5 series moves non-monotonically across versions (\autoref{sec:version_regression}), and the open-source distribution falls off steeply at the bottom (\autoref{sec:cliff}).

\begin{finding}{Frontier ceiling and cross-provider disagreement}
	\label{find:frontier_ceiling}
	No evaluated model clears 80\,\%, and frontier proprietary scores span $19.2$\,\%--$77.8$\,\% (Claude Opus 4.6 leads). Rank on \contractbench{} does not track general-purpose benchmark rank: Gemini 2.5 Flash and GPT-5.1 sit below several open-source SOTA models, and the 7.1-pp best-open vs.\ best-proprietary gap is narrower than the within-proprietary spread.
\end{finding}

\FloatBarrier
\subsection{Within-Family Scaling: A Capability Cliff}
\label{sec:cliff}
\label{sec:scaling}

\begin{figure}[t]
	\centering
	\includegraphics[width=\textwidth]{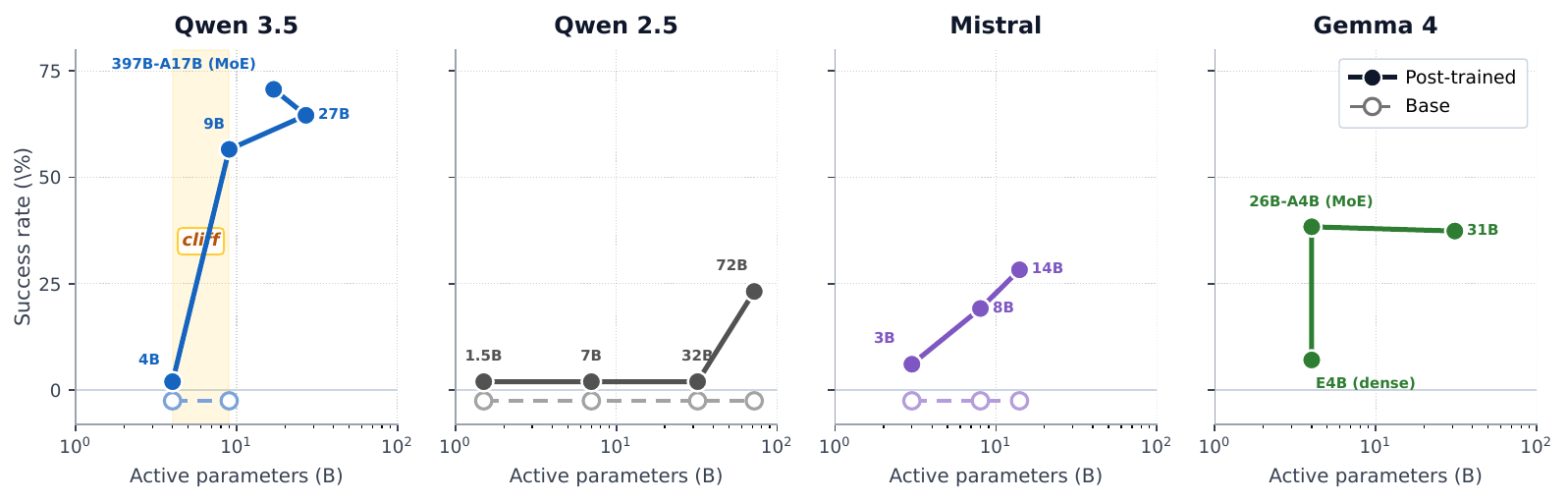}
	\caption{Within-family scaling on \contractbench{}: at fixed parameter count, the Base$\to$Instruct gap (post-training delta) is family-specific, with Qwen 3.5 reaching 56.6\,\% already at 9\,B-Instruct.}
	\label{fig:scaling}
\end{figure}

\paragraph{The Qwen 3.5 cliff.}
We analyze the impact of scaling on observation contracts with the Qwen~3.5 dense family in \autoref{fig:scaling},
and observe a sharp jumps on success rate between 4\,B and 9\,B.
Below the cliff, the 4\,B agent emits valid tool calls but gives up after one or two failed attempts, so every episode terminates before any contract predicate fires (100\,\% \texttt{OTHER}, a pre-contract failure rather than a taxonomy gap). Above it, traces show adaptive behavior: agents pivot between \texttt{curl} and \texttt{python3 urllib}, wait through scheduled maintenance, and back off after rate limits, and their residual failures move into frontier-style labels (\texttt{MISSING\_CONSTRAINT}, temporal).
What comes online is mid-trajectory restraint, not tool-call emission.

\paragraph{Other families do not reproduce the cliff.}
Qwen 2.5 Instruct climbs gradually instead of jumping (0\,\% at 1.5--32\,B, 23.2\,\% only at 72\,B). Mistral / Ministral rises smoothly (Ministral-3-3B/8B/14B at 6.1/19.2/28.3\,\%; Mistral-Small-4 MoE 42.4\,\%). Gemma 4 ramps early and then plateaus (E2B 7.1\,\%, E4B 17.1\,\%, 26B-A4B 38.4\,\%, 31B 37.4\,\%); E4B and 26B-A4B share 4\,B active compute, so the dense-vs-MoE gap reads as a vertical jump in the panel. Phi-4 14\,B Base, Llama-3.3-70B-Instruct, and DeepSeek-R1 671\,B/37\,B (Thinking MoE) all sit at or near 0\,\%, so chain-of-thought training alone does not lift compliance. The abruptness inside Qwen 3.5 fits the emergent-capability pattern of~\citet{wei2022emergent}, but the threshold tracks training recipe more than scale.

\begin{finding}{Capability cliff: emergence is family-specific, not parameter-fixed}
	\label{find:capability_cliff}
	Qwen 3.5 Instruct moves from 0\,\% at 4\,B to 56.6\,\% at 9\,B and scales smoothly to 70.7\,\% at 397\,B-A17\,B. The cliff localizes to the Qwen 3.5 training recipe rather than parameter count, and what turns on across it is adaptive recovery, not tool-call emission.
\end{finding}

\paragraph{Contract compliance emerges from post-training.}
The Base$\to$Instruct contrast in \autoref{fig:scaling} is consistent across the families for which Base weights are publicly available. Qwen 2.5 Base scores 0\,\% across 1.5--72\,B and Mistral / Ministral Base does the same across 3--14\,B; the Qwen 3.5 9\,B Base$\to$9\,B Instruct comparison reproduces the gap inside the recipe that drives the cliff (Phi-4 14\,B Base also at 0\,\%). We could not run the matching test for Qwen 3.5 27\,B / 397\,B-A17\,B or Gemma 4, since their Base checkpoints are not publicly released. Wherever the comparison is possible, no Base checkpoint clears the floor while the matched Instruct version does. Pretraining scale alone is therefore insufficient: contract compliance is acquired during post-training. \autoref{sec:version_regression} shows the converse direction, that post-training updates at fixed parameter count can also erode this capability.

\begin{finding}{Contract compliance is a post-training capability}
	\label{find:posttraining_capability}
	Frontier-tier compliance requires reasoning post-training or at least instruction-tuning on top of the base model.
	Pretraining scale alone does not produce contract compliance,
	and (\autoref{sec:version_regression}) post-training can also remove it.
\end{finding}

\FloatBarrier
\subsection{Post-Training Regression: A V-Shape in GPT-5}
\label{sec:version_regression}

\begin{figure}[t]
	\centering
	\begin{subfigure}[t]{0.42\textwidth}
		\includegraphics[width=\linewidth]{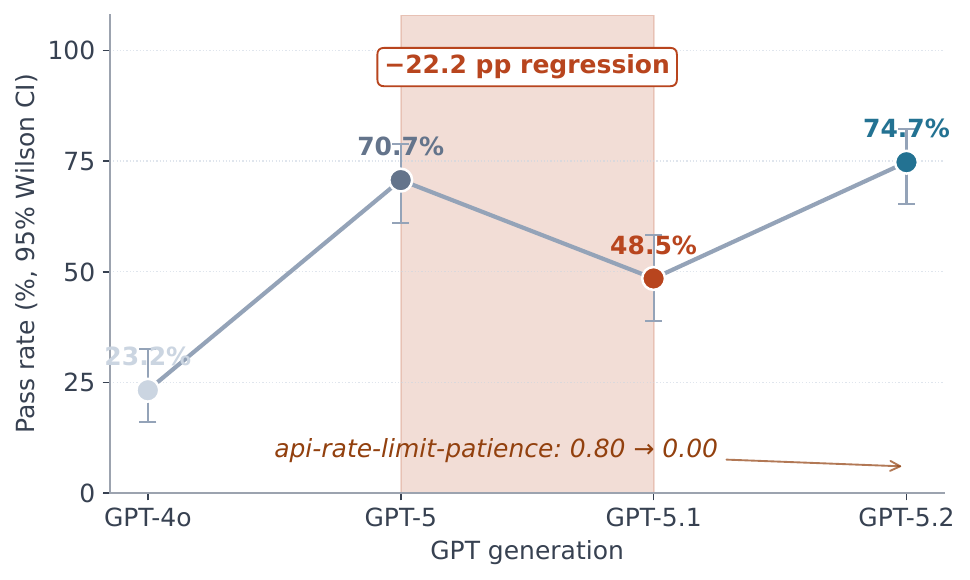}
		\caption{V-trajectory across four GPT generations.}
		\label{fig:v_shape_a}
	\end{subfigure}\hfill
	\begin{subfigure}[t]{0.28\textwidth}
		\includegraphics[width=\linewidth]{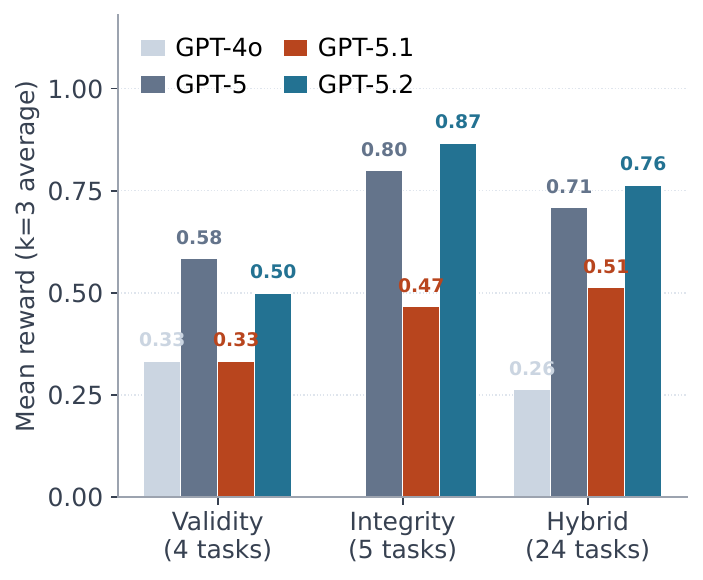}
		\caption{Per-family mean reward.}
		\label{fig:v_shape_b}
	\end{subfigure}\hfill
	\begin{subfigure}[t]{0.28\textwidth}
		\includegraphics[width=\linewidth]{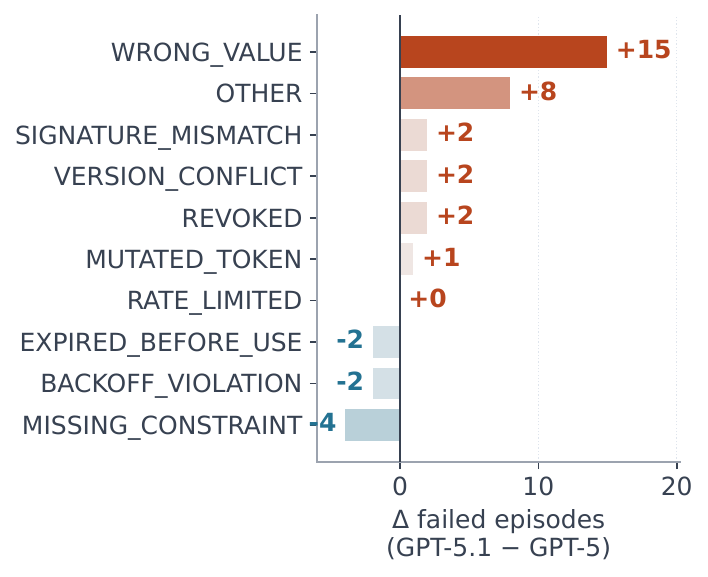}
		\caption{Failure label differences.}
		\label{fig:v_shape_c}
	\end{subfigure}
	\caption{A V-shape performance within the GPT-5 family with shared pretrained base: a structured, locatable post-training version regression. The regression is structured (specific failure modes return) and locatable (concentrated on the integrity axis), not a uniform capability loss.}
	\label{fig:v_shape}
\end{figure}

In \autoref{find:posttraining_capability}
we observe that contract compliance emerges from post-training.
Holding the pretrained base fixed, we further investigate the converse question:
how does scaling post-training on the same base affect contract compliance?
We experiment on the GPT-5 series,
where GPT-5, GPT-5.1, and GPT-5.2 share a pretrained base and differ only in post-training updates~\citep{openai2025gpt51,openai2025gpt51dev,openai2026gpt55}.
Across these generations, compliance on the full 33-task suite is non-monotonic (\autoref{fig:v_shape_a}).
We observe a \emph{version regression} on GPT-5.1 driven entirely by post-training update choices,
not model size or pretraining changes.

\paragraph{The regression is structured and locatable.}
GPT-5.1's added failures are not spread evenly across failure types:
they concentrate on the integrity axis, re-introducing modes that GPT-5 had largely suppressed (\autoref{fig:v_shape_b},\,\autoref{fig:v_shape_c}).
GPT-5.2 does not retrace this path; it lifts above GPT-5 with a more temporal mix of residual failures,
so the V has a direction as well as a depth.
The trajectory on \texttt{api-rate-limit-patience} shows an inverse-scaling (GPT-4o $0.80 \to$ GPT-5 $0.00$).
Together with the cliff (\autoref{sec:cliff}),
this tells us what emerges and what regresses are specific \emph{inhibitions}: waiting before retrying, copying bytes without paraphrasing, refusing when forbidden.
Inhibitions are fragile: post-training can erode them while general benchmarks improve (\autoref{sec:discussion}).

\begin{finding}{Version regression: post-training may erode byte-level integrity}
	\label{find:version_regression}
	At base model, the GPT-5 series exhibits a deep V-shape contract-compliance regression across post-training updates. The regression is structured: GPT-5.1's failures concentrate on byte-level integrity, and post-training appears to erode the specific inhibitions (wait, byte-preserve, refuse) that contract compliance requires.
\end{finding}

\FloatBarrier
\subsection{Failure Labels as an Actionable Reward Signal}
\label{sec:labels}
\label{sec:rl_reward}

\begin{wraptable}[10]{r}{0.45\textwidth}
	\vspace{-1.0em}
	\centering
	\footnotesize
	\caption{%
		Effects of in-context reinforcement using \contractbench{} failure labels.
	}
	\label{tab:dose_response}
	\begin{tabular}{@{}lccc@{}}
	\toprule
	\textbf{Retry condition}                                   & \textbf{Pass} & \textbf{Rate}     & \textbf{$\Delta$ vs naive} \\
	\midrule
	No hint                                                    & 6 / 42        & 14.3\,\%          & n/a                        \\
	Wrong-label hint                                           & 5 / 42        & 11.9\,\%          & $-2.4$\,pp                 \\
	Correct-label hint                                         & 8 / 42        & \textbf{19.0\,\%} & $+4.8$\,pp                 \\
	\midrule
	\multicolumn{3}{@{}l}{\textbf{$\Delta$ correct $-$ wrong}} &                                       \textbf{$+7.1$\,pp}        \\
	\bottomrule
\end{tabular}

\end{wraptable}

We adapt test-time in-context reinforcement learning~\citep{shinn2023reflexion}
to test whether \contractbench{} failure labels function as an actionable reward signal,
focusing on whether label \emph{content}, not the act of retrying, drives recovery.
We apply \contractbench{}'s failure labels as feedback on the 42 GPT-5.1 failures from \autoref{sec:version_regression},
whose integrity-heavy profile (\appautoref{app:failure_dist_full}) admits clean re-attempts.
We experiment with three variants:
\emph{naive} (retry with no hint),
\emph{correct-label} (the original label injected as a coaching note),
and \emph{wrong-label} (a structurally-different far-axis label as a specificity control).

\paragraph{Where context-level feedback suffices, and where it does not.}

We present the results of verbal reinforcement with failure labels in \autoref{tab:dose_response}.
Correct signal beats naive by $+4.8$\,pp,
while wrong signal \emph{underperforms} naive by $2.4$\,pp; the headline gap $\Delta$\,(correct$-$wrong) $= +7.1$\,pp isolates label \emph{content} as the operative variable, not the act of retrying.
The labels therefore carry directional,
label-specific corrective signal in the construct-validity sense of~\citet{bean2025construct} and \citet{alaa2025medical}.
We present the severity-weighted label-to-deficit mapping in \appautoref{app:rl_reward} is a natural starting point.
The $+7.1$\,pp gap is not uniform across labels (full breakdown in \appautoref{tab:per_label_retry}): recovery concentrates in integrity-style labels that admit a clean re-attempt,
while temporal and integrity-fragile labels are unresponsive,
since a label that says ``you waited too long'' does not give the agent a recoverable next action.
The single \texttt{RATE\_LIMITED} case inverts:
the \emph{wrong}-label hint succeeds while the correct one does not,
because telling a rate-limited agent ``you were rate-limited'' encourages further aggressive retries.
Such failures point to architectural intervention (machine-readable TTLs, back-off middleware) rather than test-time signal.

\begin{finding}{\contractbench{} labels are an actionable reward signal}
	\label{find:label_coaching}
	\contractbench{}'s failure labels carry directional, label-specific corrective signal:
	correct-label signal beats wrong-label coaching by $+7.1$\,pp on 42 paired GPT-5.1 failures,
	a viable starting point for reward modeling and RL post-training.
	The signal splits by axis: integrity-style failures are fixable with in-context signals,
	whereas temporal failures require further efforts. 
\end{finding}

\section{Related Work}
\label{sec:related}

\paragraph{Agent benchmarks.}
Existing benchmarks evaluate task completion (navigating websites~\citep{zhou2024webarena}, resolving software issues~\citep{jimenez2024swebench}, planning trips~\citep{xie2024travelplanner}, or using tools~\citep{qin2024toolllm,liu2024agentbench}), but none test whether agents preserve the \emph{artifacts} that tools return (\autoref{tab:landscape}).
The closest work is TicToc~\citep{cheng2026llmagentstemporallyblind}, which identifies ``temporal blindness'' (failure to re-invoke tools when cached information becomes stale).
\contractbench extends this in two ways: we test \emph{proactive} temporal planning (reordering by deadline), not just reactive staleness detection, and we add the orthogonal integrity axis.

\paragraph{Emergent capabilities and inverse scaling.}
\citet{wei2022emergent} observe that certain capabilities emerge abruptly at specific model scales rather than improving gradually.
\citet{mckenzie2023inverse} document \emph{inverse scaling} (tasks where larger models perform worse), attributing this to overconfident pattern-matching.
Our findings contribute to both literatures: the capability cliff (0\% at 4B $\to$ 76.5\% at 27B) demonstrates emergence, while \texttt{api-rate-limit-patience} (where capable models score \emph{lower} due to aggressive retries) demonstrates inverse scaling in the agentic setting.

\paragraph{Iterative refinement and self-correction.}
Reflexion~\citep{shinn2023reflexion} and Self-Refine~\citep{madaan2023selfrefine} retry failed episodes using the model's own verbal self-critique as the corrective signal.
Our predictive-validity experiment (\autoref{sec:rl_reward}) replaces self-critique with a deterministic, server-emitted failure label and adds a paired wrong-label control to isolate label \emph{content} as the operative variable.
\contractbench's \numFailureLabels-label taxonomy is also a natural process-reward signal for agentic RL post-training~\citep{cheng2025agentr1,pan2026rewardmodelingreinforcementlearningbased}, which we leave to future work.

\section{Conclusion}\label{sec:conclusion}

We formalize \emph{observation contracts}, 
a tool-returned artifact governed by a temporal validity window and a byte-level integrity predicate.
We present \contractbench{}, a deterministic benchmark for measuring observation contract compliance in LLM agents.
\contractbench{} comprises \numHarborTasks{} tasks drawn from real API contract patterns,
and probes the two orthogonal evaluation axes of observation contracts that prior agent benchmarks do not jointly evaluate.
Each episode runs under a virtual clock with a programmatic validator and yields a structured failure label, 
which also serves as a signal for test-time correction and training. 
Our experiments show that contract compliance is unsolved at the frontier and behaves non-monotonically with scale and post-training. 
\contractbench{} aims to make observation contracts a first-class evaluation target and to motivate further research on agents that act reliably under the inter-step constraints imposed by real external systems and APIs.
Our source code and dataset are available on \href{\repoUrl}{GitHub} and \href{\datasetUrl}{Hugging Face}, respectively.

\bibliographystyle{plainnat}
\bibliography{main}

\clearpage
\appendix
\doparttoc
\faketableofcontents

\addcontentsline{toc}{section}{Appendix}
\part{Appendix}
\parttoc

\clearpage

\section{Discussion}
\label{sec:discussion}

\paragraph{From findings to framework requirements.}
The three results in \autoref{sec:experiments} jointly characterize observation-contract compliance as a fragile, post-training-driven inhibition capability: emergence is family-specific (\autoref{sec:cliff}), can regress at constant parameter count (\autoref{sec:version_regression}), and admits actionable retry-time correction (\autoref{sec:labels}). For agent framework designers, the operational consequence is that mitigations cannot be left to the next model upgrade. Four levers follow directly from the failure modes \contractbench{} surfaces:
\begin{enumerate*}[label=(\arabic*)]
\item make artifact TTLs explicit and machine-readable (not buried in HTTP headers);
\item provide handle-based artifact storage to eliminate in-context byte mutation;
\item implement back-off middleware so that aggressive retry behavior cannot exhaust rate-limit quotas;
\item treat the structured failure label as a deployable corrective signal --- inject it as a coaching note on retry (validated in \autoref{sec:labels}: $+7.1$\,pp gap between correct- and wrong-label coaching), and use it as a candidate process-level reward for downstream RL post-training~\citep{cheng2025agentr1,pan2026rewardmodelingreinforcementlearningbased}.
\end{enumerate*}
The label is unique among process signals because it is dense (per-failure semantic detail), automatically generated by the deterministic validator (no human annotation required), and actionable at retry time without retraining --- in contrast to model-generated self-critique~\citep{shinn2023reflexion,madaan2023selfrefine} or human-curated process-reward datasets like PRM800K.

\paragraph{Hard limits remain architectural.}
Some failures are not addressable by scale or by post-training alone. The architectural universality is striking: the best open-source MoE (Qwen3.5-397B-A17B at 70.7\,\%) lands within 7\,pp of the best frontier model (Claude Opus 4.6 at 77.8\,\%), while the reasoning-trained DeepSeek-R1 (671\,B / 37\,B-active) collapses to 0\,\% --- chain-of-thought training does not rescue contract compliance. Combined with the universally-failing tasks identified in \autoref{sec:labels} (notably \texttt{multi-turn-recall}, where every model in our cohort scores 0), these limits suggest that further progress will require architectural changes (\eg, handle-based deferred binding~\citep{dennis1966programming}) rather than larger or differently-tuned LLMs.

\section{Limitations and Broader Impact}
\label{sec:limitations}

\paragraph{Limitations.}
\begin{enumerate*}[label=(\arabic*)]
\item The virtual clock abstracts away real-world timing challenges (network latency, API variability). This abstraction trades realism for reproducibility; production deployments should validate against real-time conditions.
\item We do not provide a human baseline. Each task ships with a deterministic reference solution (\autoref{sec:oracle}) that confirms in-principle solvability, but a small human study would quantify the human--LLM gap explicitly.
\item Frontier proprietary models (Claude, GPT, Gemini) cannot be placed on the parameter-axis scaling figure because their parameter counts are not disclosed; their results appear in the leaderboard (\autoref{tab:llm_results}) but not in \autoref{fig:scaling}.
\end{enumerate*}

\paragraph{Broader impact.}
\contractbench{} provides a deployable diagnostic for an under-tested capability:
\begin{enumerate*}[label=(\arabic*)]
\item operators can detect post-training regressions on contract-respecting behavior that general-capability benchmarks would miss (\autoref{sec:version_regression});
\item the structured failure taxonomy enables label-aware retry as a low-cost mitigation deployable today (\autoref{sec:labels}).
\end{enumerate*}
We do not foresee negative societal impacts: the benchmark evaluates agent robustness, not new capabilities.



\section{Full Benchmark Landscape}
\label{app:landscape_full}

\appautoref{tab:landscape_full} reproduces the full 11-benchmark landscape comparison referenced in \autoref{sec:intro} (\autoref{tab:landscape} in the main body lists a representative subset of seven). Column conventions are identical: ``Validity'' = temporal constraints (TTLs, rate limits, version conflicts), ``Integrity'' = byte-level artifact preservation, ``Programmatic'' = fully scriptable evaluation with no human raters or LLM-as-judge.

\begin{table}[!htbp]
	\centering
	\caption{Full agent benchmark landscape across 11 published benchmarks plus \contractbench{}. Compact subset reproduced in \autoref{tab:landscape}.}
	\label{tab:landscape_full}
	\begin{adjustbox}{max width=\textwidth}
		\begin{tabular}{@{}lcccc@{}}
	\toprule
	\textbf{Benchmark}                                       & \textbf{\#Tasks}         & \textbf{Validity} & \textbf{Integrity} & \textbf{Programmatic} \\
	\midrule
	WebArena~\citep{zhou2024webarena}                        & 812                      & \xmark            & \xmark             & \cmark                \\
	AgentBench~\citep{liu2024agentbench}                     & \num{1014}               & \xmark            & \xmark             & \cmark                \\
	TravelPlanner~\citep{xie2024travelplanner}               & \num{1225}               & \xmark            & \xmark             & Partial               \\
	SWE-bench~\citep{jimenez2024swebench}                    & \num{2294}               & \xmark            & \xmark             & \cmark                \\
	NATURAL PLAN~\citep{zheng2024naturalplan}                & \num{3600}               & Partial           & \xmark             & \cmark                \\
	ToolBench~\citep{qin2024toolllm}                         & \num{16464}              & \xmark            & \xmark             & Partial               \\
	TicToc~\citep{cheng2026llmagentstemporallyblind}         & 700+                     & Partial           & \xmark             & \cmark                \\
	TPS-Bench~\citep{sehgal2026realtime}                     & 200                      & Partial           & \xmark             & \cmark                \\
	REALM-Bench~\citep{silva2025llmagentssolvecollaborative} & 14                       & Partial           & \xmark             & \cmark                \\
	$\tau$-bench~\citep{yao2025taubench}                     & 165                      & \xmark            & \xmark             & \cmark                \\
	Terminal-Bench~\citep{merrill2026terminalbench}          & 80                       & \xmark            & \xmark             & \cmark                \\
	\midrule
	\contractbench (Ours)                                    & \textbf{\numHarborTasks} & \cmark            & \cmark             & \cmark                \\
	\bottomrule
\end{tabular}

	\end{adjustbox}
\end{table}

\section{Full Task Catalog}
\label{app:tasks}

\begin{table}[h]
	\centering
	\caption{Complete task catalog for \contractbench (\numTotalTasks{} tasks).}
	\small
	\begin{tabular}{@{}llll@{}}
	\toprule
	\textbf{Task ID}                         & \textbf{Difficulty}     & \textbf{Quadrant} & \textbf{Contract Pattern} \\
	\midrule
	\texttt{scheduled-maintenance}           & Medium                  & Q2                & Timing/Backoff            \\
	\texttt{api-rate-limit-patience}         & Hard                    & Q2                & Timing/Backoff            \\
	\texttt{token-refresh-workflow}          & Hard                    & Q2                & Timing/Backoff            \\
	\texttt{presigned-url-download}          & Hard                    & Q2                & Timing/Backoff            \\
	\texttt{csrf-form-submit}                & Hard                    & Q3                & Byte-exact                \\
	\texttt{url-trap-ellipsis}               & Hard                    & Q3                & Byte-exact                \\
	\texttt{long-token-handling}             & Hard                    & Q3                & Byte-exact                \\
	\texttt{extreme-url-length}              & Hard                    & Q3                & Byte-exact                \\
	\texttt{presigned-url-integrity}         & Hard                    & Q4                & Signed requests           \\
	\texttt{multi-token-workflow}            & Hard                    & Q4                & Signed requests           \\
	\texttt{multi-resource-priority}         & Hard                    & Q4                & Resource mgmt             \\
	\texttt{scattered-url-assembly}          & Hard                    & Q4                & State chains              \\
	\texttt{cumulative-hash-chain}           & Hard                    & Q4                & State chains              \\
	\texttt{multi-turn-recall}               & Hard                    & Q4                & State chains              \\
	\texttt{constraint-overload-protocol}    & Hard                    & Q4                & Signed requests           \\
	\texttt{adversarial-shortcut-injection}  & Hard                    & Q4                & Signed requests           \\
	\midrule
	\texttt{retry-backoff-compliance}        & Hard                    & Q2                & Timing/Backoff            \\
	\texttt{webhook-hmac-verify}             & V.~Hard                 & Q3                & Byte-exact                \\
	\texttt{basic-oauth-token}               & Hard                    & Q4                & OAuth/Auth                \\
	\texttt{oauth-authorization-code}        & V.~Hard                 & Q4                & OAuth/Auth                \\
	\texttt{oauth-pkce-with-rotation}        & Extreme                 & Q4                & OAuth/Auth                \\
	\texttt{api-key-rotation}                & V.~Hard                 & Q4                & OAuth/Auth                \\
	\texttt{session-cookie-chain}            & V.~Hard                 & Q4                & OAuth/Auth                \\
	\texttt{signed-request-canonicalization} & V.~Hard                 & Q4                & Signed requests           \\
	\texttt{certificate-pinning-handshake}   & Extreme                 & Q4                & Signed requests           \\
	\texttt{cursor-pagination-integrity}     & V.~Hard                 & Q4                & State chains              \\
	\texttt{content-negotiation-chain}       & V.~Hard                 & Q4                & State chains              \\
	\texttt{etag-conditional-get}            & Hard                    & Q4                & Resource mgmt             \\
	\texttt{idempotency-key-retry}           & V.~Hard                 & Q4                & Resource mgmt             \\
	\texttt{distributed-lock-acquire}        & Extreme                 & Q4                & Resource mgmt             \\
	\texttt{multi-service-saga}              & Extreme                 & Q4                & Multi-service             \\
	\texttt{event-sourced-consistency}       & Extreme                 & Q4                & Multi-service             \\
	\texttt{cascading-token-revocation}      & V.~Hard                 & Q4                & Multi-service             \\
	\midrule
	\multicolumn{3}{@{}l}{\textbf{Total}}    & \textbf{\numTotalTasks}                                                 \\
	\bottomrule
\end{tabular}

\end{table}

\section{Task File Anatomy}
\label{app:files}

\autoref{fig:task_anatomy} reproduces the four files that make up a single \contractbench{} task, using \texttt{harbor/tasks/presigned-url-download/} as a representative example. The prose summary is in \autoref{sec:files}.

\begin{figure}[!htbp]
	\centering
	\includegraphics[width=\textwidth]{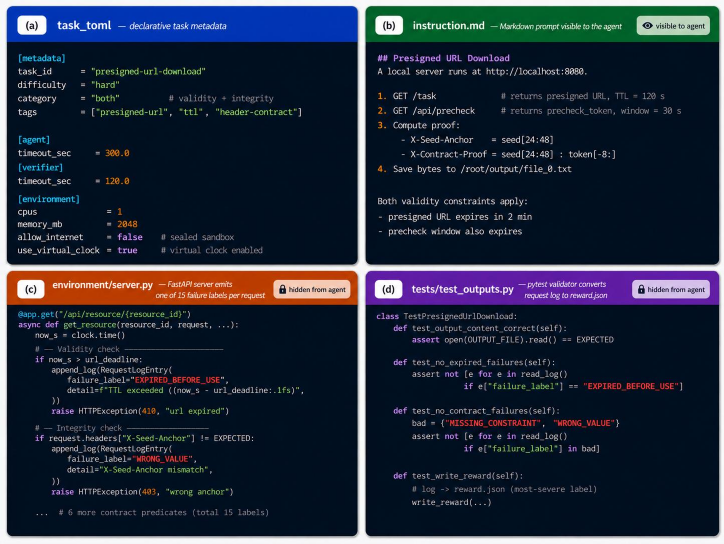}
	\caption{Anatomy of a \contractbench{} task. Each task is four files: (a) TOML metadata, (b) Markdown instruction (agent-visible), (c) FastAPI server emitting failure labels over a virtual clock, (d) pytest validator producing the canonical \texttt{reward.json}. The agent sees only (b) and the server's HTTP responses; (a) and (d) are hidden, eliminating test-set leakage and LLM-as-judge ambiguity.}
	\label{fig:task_anatomy}
\end{figure}

\section{Contract Abstraction: Implementation}
\label{app:contract}

\begin{verbatim}
@dataclass
class Contract:
    contract_id: str
    issued_at: float           # virtual clock time
    expires_at: float          # validity deadline
    expected_bytes_hash: str   # SHA-256 of canonical value
    resource_metadata: dict

    def validate_validity(self, current_time: float) -> bool:
        return current_time < self.expires_at

    def validate_integrity(self, submitted_value: str) -> bool:
        return sha256(submitted_value.encode()).hexdigest() \
               == self.expected_bytes_hash

    def validate(self, current_time, submitted_value) -> FailureLabel:
        if not self.validate_validity(current_time):
            return FailureLabel.EXPIRED_BEFORE_USE
        if not self.validate_integrity(submitted_value):
            return FailureLabel.MUTATED_TOKEN
        return FailureLabel.SUCCESS
\end{verbatim}

\section{Episode Result Schema}
\label{app:schema}

Each episode produces a JSON object with the following fields:

\begin{verbatim}
{
  "task_id":              "multi-token-workflow",
  "category":             "both",
  "seed":                 7,
  "agent":                "naive",
  "success":              false,
  "failure_label":        "EXPIRED_BEFORE_USE",
  "failure_detail": {
    "resource_id":        "file_B",
    "ttl_seconds":        15,
    "time_of_attempt":    42.3,
    "expired_by_seconds": 27.3
  },
  "steps":                4,
  "tool_calls":           3,
  "virtual_time_elapsed": 42.3,
  "trace_hash":           "a1b2c3..."
}
\end{verbatim}

\section{Two Illustrative Tasks}
\label{app:examples}

\paragraph{Validity-heavy.}
The agent receives three presigned download URLs with TTLs of 10s, 30s, and 60s (virtual time); each download takes 8s. A naive agent that processes URLs in listed order finds the first URL expired by the time it attempts the second download. A validity-aware agent reorders by urgency (shortest TTL first) and succeeds on all three.

\paragraph{Integrity-heavy.}
A tool returns a presigned S3 URL with a 256-character query string containing an HMAC signature; the agent's tool-call interface has a 200-character input limit. A direct-pass agent submits the truncated URL, which fails HMAC validation (\texttt{TOOL\_INPUT\_TOO\_LONG}). A handle-based agent stores the URL server-side and submits a short capability handle (\texttt{@HANDLE:url\_1}), which is resolved to the full URL at execution time.


\section{Proof Sketch: Orthogonality of Validity and Integrity}
\label{app:orthogonal-proof}

We restate \autoref{prop:orthogonal}: for any contract $C$ with non-empty validity window $W(C)$ and non-trivial integrity predicate $\pi$, all four cells of the $2{\times}2$ partition $\{\mathrm{V}, \neg\mathrm{V}\} \times \{\mathrm{I}, \neg\mathrm{I}\}$ over submissions are non-empty.

\begin{proof}[Proof sketch]
	Pick any $t_\text{in} \in W(C)$ (non-empty by hypothesis) and any $t_\text{out} \notin W(C)$. Pick $o_\text{ok} = o$ so $\pi(o_\text{ok}) = 1$, and pick $o_\text{bad}$ with $\pi(o_\text{bad}) = 0$ (such an $o_\text{bad}$ exists because $\pi$ is non-trivial). The four submissions
	\[
		(o_\text{ok},\, t_\text{in}),\quad
		(o_\text{ok},\, t_\text{out}),\quad
		(o_\text{bad},\, t_\text{in}),\quad
		(o_\text{bad},\, t_\text{out})
	\]
	respectively populate the cells $(\mathrm{V}, \mathrm{I})$, $(\neg\mathrm{V}, \mathrm{I})$, $(\mathrm{V}, \neg\mathrm{I})$, $(\neg\mathrm{V}, \neg\mathrm{I})$. Hence none is empty, and fixing one axis does not constrain the other.
\end{proof}

\section{Production Path Mutations}
\label{app:path_mutations}

In production, the path from API response, through the agent's context window, into the next tool call is not byte-preserving. LangChain truncates tool inputs past its token limit. HTTP libraries quietly re-encode percent-escaped URLs. Middleware re-sorts query parameters before signing. Rendered link text disagrees with the underlying \texttt{href}. None of these are corruptions we introduce; each one is a documented behavior of widely deployed agent infrastructure. The agent has no way to see any of it from the context: it reads a plausible string, sends it, and the server validates different bytes. \contractbench{} reproduces five of these naturally-occurring mutations as deterministic toggles so each cause can be isolated:
\begin{itemize}[nosep,leftmargin=1.5em]
	\item \textbf{Truncation.} The tool-call interface clips an artifact past its input limit. The agent submits a partial HMAC, and the server returns \texttt{TOOL\_INPUT\_TOO\_LONG}.
	\item \textbf{Line-wrap insertion.} A long token gets soft-wrapped with \texttt{\textbackslash n}. The agent has to reassemble the original string before submitting; if it misses, the server returns \texttt{MUTATED\_TOKEN}.
	\item \textbf{URL re-encoding.} A percent-encoded \texttt{\%2F} is normalized to \texttt{/}. Any HMAC computed over the original encoding no longer matches, and the server returns \texttt{MUTATED\_TOKEN}.
	\item \textbf{Query reorder.} Middleware re-sorts query parameters before the request leaves the agent. AWS SigV4 and Stripe HMAC are computed over an exact parameter order, so the server returns \texttt{SIGNATURE\_MISMATCH}.
	\item \textbf{UI trap.} Rendered text (\eg, the link label \texttt{"Download"}) does not match the underlying \texttt{href}. An agent that submits the visible string instead of the byte value triggers \texttt{WRONG\_VALUE}.
\end{itemize}

\section{Severity-Weighted Failure-Label Mapping}
\label{app:rl_reward}

The full mapping from \contractbench's \numFailureLabels{} failure labels to severity-weighted capability deficits is given in \appautoref{tab:rl_reward}. The weights serve two operational roles in this paper: (i)~most-severe-label aggregation when an episode emits multiple labels, which determines the primary label written to \texttt{reward.json} and visualized in \autoref{fig:failure_dist}; (ii)~test-time retry coaching --- the most-severe label is the corrective hint injected at inference in the predictive-validity experiment of \autoref{tab:dose_response}, with no model retraining required. Using the same severity-weighted mapping as a process-level reward signal during RL post-training is a natural next step~\citep{cheng2025agentr1,pan2026rewardmodelingreinforcementlearningbased} that we leave to future work.

\begin{table}[!htbp]
	\centering
	\caption{Severity-weighted mapping from \contractbench failure labels to capability deficits.}
	\label{tab:rl_reward}
	\small
	\begin{tabular}{@{}lcl@{}}
	\toprule
	\textbf{Failure Label}        & \textbf{Severity} & \textbf{Capability Deficit}           \\
	\midrule
	\texttt{SUCCESS}              & $+1.0$            & Reinforce correct behavior            \\
	\texttt{EXPIRED\_BEFORE\_USE} & $-1.0$            & Temporal awareness, deadline planning \\
	\texttt{MUTATED\_TOKEN}       & $-1.0$            & Byte-level preservation               \\
	\texttt{SHORTCUT\_TAKEN}      & $-0.9$            & Resist adversarial shortcuts          \\
	\texttt{WRONG\_VALUE}         & $-0.8$            & Constraint satisfaction               \\
	\texttt{MISSING\_CONSTRAINT}  & $-0.7$            & Protocol completeness                 \\
	\texttt{RATE\_LIMITED}        & $-0.5$            & Patience and backoff strategy         \\
	\texttt{OTHER}                & $-0.3$            & General task completion               \\
	\bottomrule
\end{tabular}

\end{table}


\section{Run-to-Run Reproducibility and Episode Budgets}\label{app:reproducibility}

\paragraph{Episode budgets.}
Each task fixes a step budget and a virtual-time budget in its TOML header.
Episodes that exceed either budget, raise an unrecoverable tool error, or are rate-limited terminate immediately, receive the appropriate failure label, and count as failures in the $n{=}99$ per-model total (\autoref{sec:protocol}), consistent with the conventions of Terminal-Bench~\citep{merrill2026terminalbench} and $\tau$-bench~\citep{yao2025taubench}.

\paragraph{Determinism budget.}
Four design choices drive run-to-run determinism: (i)~the virtual clock eliminates real-time dependencies; (ii)~seeded randomness controls every stochastic element; (iii)~each episode records a SHA-256 trace hash; (iv)~all LLM experiments use temperature~0 with pinned model IDs and logged API parameters.
We use $k{=}3$ rather than the $k{\geq}5$ of Terminal-Bench because empirically the bulk of cells are bit-identical at $k{=}3$ and the variance is concentrated on a small set of timing-sensitive tasks, characterized below.

\paragraph{Empirical determinism.}
The reproducibility claim of \autoref{sec:protocol} is grounded in the released corpus ($n{=}2{,}259$ episodes across 25 models with $\geq 2$ runs per (model, task) cell, totalling 733 such cells).
\textbf{86.6\% of (model, task) pairs are perfectly deterministic across $k{=}3$ runs}; the residual 13.4\% concentrates on a small set of timing-sensitive tasks (\appautoref{tab:run_consistency}, \appautoref{fig:run_consistency}).
The variance is not measurement noise: in every variable cell, the variance comes from genuine stochasticity in the agent's strategy under tight TTL or rate-limit pressure, not from validator non-determinism (the virtual clock + SHA-256 trace hash hold all else fixed).

\begin{table}[!htbp]
	\centering
	\caption{Top variance-driving tasks across all evaluated models, ordered by the number of distinct models whose three runs disagree. Eight tasks account for the majority of run-to-run variance; the remaining 25 tasks are essentially deterministic.}
	\label{tab:run_consistency}
	\small
	\begin{adjustbox}{max width=\columnwidth}
\begin{tabular}{@{}l c p{0.55\textwidth}@{}}
	\toprule
	\textbf{Task}                     & \textbf{\# models w/ var.} & \textbf{Why variance arises}                         \\
	\midrule
	\texttt{oauth-pkce-with-rotation} & 7                          & Token rotation interleaves with rate-limit windows   \\
	\texttt{webhook-hmac-verify}      & 6                          & Stochastic delivery retry + HMAC re-derivation       \\
	\texttt{multi-token-workflow}     & 6                          & Multi-step token-swap; one mid-flow misstep cascades \\
	\texttt{oauth-authorization-code} & 6                          & State parameter handling under concurrent attempts   \\
	\texttt{presigned-url-download}   & 5                          & Tight TTL races first vs.\ second download attempt   \\
	\texttt{api-rate-limit-patience}  & 5                          & 429-aware retry behavior is timing-sensitive         \\
	\texttt{cumulative-hash-chain}    & 4                          & Long-context drift in hash-chain reconstruction      \\
	\texttt{long-token-handling}      & 4                          & Truncation depends on the model's tool-call buffer   \\
	\bottomrule
\end{tabular}

	\end{adjustbox}
\end{table}

\begin{figure}[!htbp]
	\centering
	\includegraphics[width=\textwidth]{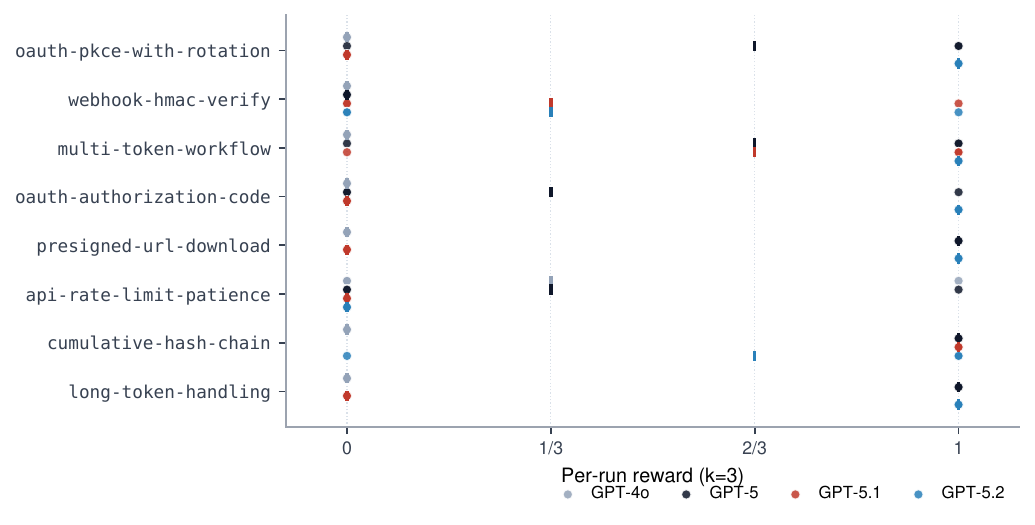}
	\caption{\textbf{Run-to-run variability on the five most variance-prone tasks for the GPT-5 series.} Each dot is a single run's reward; the vertical bar is the per-(model, task) mean. Variance is bounded ($k{=}3$ rewards $\in \{0, \tfrac{1}{3}, \tfrac{2}{3}, 1\}$) and concentrates on tasks where adaptive timing decisions are pivotal; on most other tasks the same three runs collapse onto a single point.}
	\label{fig:run_consistency}
\end{figure}

\section{Compute Resources and Pinned Model Checkpoints}
\label{app:compute}

\contractbench{} evaluation runs against two backend types: external LLM APIs (proprietary frontier models and many open-source models hosted by inference providers) and a local vLLM~\citep{kwon2023vllm} server (for the smaller open-source variants we hosted ourselves). All experiments use temperature 0; per-episode wall-clock budget is 600\,s (\autoref{sec:experiments}); per-model episode count is $k{=}3$ rollouts $\times$ \numHarborTasks{} tasks $= n{=}99$ episodes.

\paragraph{API-served models.}
We use pinned model identifiers to keep the evaluation reproducible against a moving frontier. The pinned IDs at evaluation time are listed in \appautoref{tab:model_checkpoints}; they are also recorded inside every \texttt{reward.json} record in the released corpus, so any future rerun can be checked for drift. API providers used: Anthropic (Claude family), OpenAI (GPT family), Google AI Studio (Gemini family), Hugging Face Inference (Qwen 3.5 family + Qwen 2.5 + the Mistral/Ministral series), Together AI (Llama-3.3-70B, MiniMax-M2.5, DeepSeek-R1), and OpenRouter (Gemma 4 series, MiniMax variants, Mistral-Small-4). Aggregate API spend across the full \NumModels-row evaluation was approximately \$200\,USD. Wall-clock for one full leaderboard pass is roughly 30 hours when parallelized across providers and roughly 5--7 days serialized.

\paragraph{Local vLLM-served models.}
The smaller open-source variants we did not have hosted-API access to (Qwen3.5-9B / 4B / Base variants, Phi-4 14\,B Base) were served locally via vLLM 0.7. Hardware: a single workstation with 1$\times$ NVIDIA A100 (80\,GB) for the 9\,B and 14\,B models, with model weights pulled from Hugging Face. Per-episode latency under vLLM: 5--40\,s depending on the model and the task's tool-call depth; per-model wall-clock for the full 99-episode pass: 15--90 minutes. No fine-tuning, no LoRA, no training; the local stack is inference-only.

\paragraph{Aggregation and figure scripts.}
The aggregation scripts (\texttt{experiments/scripts/aggregate\_failure\_labels.py}, the figure generators under \texttt{paper/scripts/}) run on a single CPU in under five minutes per task. Total local CPU usage for all paper figures is approximately 10 minutes.

\paragraph{Preliminary and abandoned runs.}
Beyond the reported runs, we ran approximately 1.5$\times$ the reported episodes across (a)~early task iterations that were superseded before the 33-task release was frozen, and (b)~a small set of debug runs to diagnose timing-sensitive variance on the eight tasks documented in \appautoref{tab:run_consistency}. None of these are included in the leaderboard or the released corpus.

\begin{table}[!htbp]
	\centering
	\scriptsize
	\setlength{\tabcolsep}{4pt}
	\caption{Pinned model identifiers and provenance for every model in the leaderboard. ``Provenance'' cites the model card or technical report for the asset.}
	\label{tab:model_checkpoints}
	\begin{tabular}{@{}lll@{}}
		\toprule
		\textbf{Model} & \textbf{Provider / Pinned ID} & \textbf{Provenance} \\
		\midrule
		\multicolumn{3}{@{}l}{\emph{Frontier proprietary}} \\
		\midrule
		Claude Opus 4.6     & Anthropic / \texttt{claude-opus-4-6}                 & \citep{anthropic2025claude46} \\
		Claude Sonnet 4.5   & Anthropic / \texttt{claude-sonnet-4-5-20250929}      & \citep{anthropic2025claude45} \\
		GPT-5.2             & OpenAI / \texttt{gpt-5.2}                            & \citep{openai2026gpt52} \\
		GPT-5.1             & OpenAI / \texttt{gpt-5.1}                            & \citep{openai2025gpt51} \\
		GPT-5               & OpenAI / \texttt{gpt-5}                              & \citep{openai2025gpt5} \\
		GPT-4o              & OpenAI / \texttt{gpt-4o-2024-08-06}                  & \citep{openai2024gpt4o} \\
		Gemini 2.5 Pro      & Google AI Studio / \texttt{gemini-2.5-pro}           & \citep{google2025gemini25} \\
		Gemini 2.5 Flash    & Google AI Studio / \texttt{gemini-2.5-flash}         & \citep{google2025gemini25} \\
		\midrule
		\multicolumn{3}{@{}l}{\emph{Open-source SOTA}} \\
		\midrule
		Qwen3.5-397B-A17B   & Hugging Face / \texttt{Qwen/Qwen3.5-397B-A17B}       & \citep{qwen2025qwen35} \\
		Qwen3.5-27B         & Hugging Face / \texttt{Qwen/Qwen3.5-27B}             & \citep{qwen2025qwen35} \\
		Qwen3.5-9B          & Hugging Face / \texttt{Qwen/Qwen3.5-9B}              & \citep{qwen2025qwen35} \\
		Qwen2.5-72B-Instruct & Hugging Face / \texttt{Qwen/Qwen2.5-72B-Instruct}   & \citep{qwen2024qwen25} \\
		MiniMax-M2.5        & Together / \texttt{MiniMaxAI/MiniMax-M2.5}           & \citep{minimax2025m25} \\
		MiniMax-M2.1        & Hugging Face / \texttt{MiniMaxAI/MiniMax-M2.1}       & \citep{minimax2025m25} \\
		MiniMax-M2          & Hugging Face / \texttt{MiniMaxAI/MiniMax-M2}         & \citep{minimax2025m25} \\
		Mistral-Small-4     & OpenRouter / \texttt{mistralai/mistral-small-2603}   & \citep{mistral2025small4} \\
		Ministral-3-14B     & OpenRouter / \texttt{mistralai/ministral-14b-2512}   & \citep{mistral2025ministral3} \\
		Ministral-3-8B      & OpenRouter / \texttt{mistralai/ministral-8b-2512}    & \citep{mistral2025ministral3} \\
		Ministral-3-3B      & OpenRouter / \texttt{mistralai/ministral-3b-2512}    & \citep{mistral2025ministral3} \\
		Gemma-4-26B-A4B     & OpenRouter / \texttt{google/gemma-4-26b-a4b-it}      & \citep{gemmateam2025gemma4} \\
		Gemma-4-31B         & Hugging Face / \texttt{google/gemma-4-31B-it}        & \citep{gemmateam2025gemma4} \\
		Gemma-4-E4B         & Ollama / \texttt{gemma-4-e4b}                        & \citep{gemmateam2025gemma4} \\
		Llama-3.3-70B       & Together / \texttt{meta-llama/Llama-3.3-70B-Instruct-Turbo} & \citep{meta2024llama33} \\
		DeepSeek-R1         & Together / \texttt{deepseek-ai/DeepSeek-R1}          & \citep{deepseek2025r1} \\
		Phi-4 14B           & vLLM / \texttt{microsoft/phi-4}                      & \citep{microsoft2024phi4} \\
		\bottomrule
	\end{tabular}
\end{table}

\section{Full Master Leaderboard}
\label{app:llm_results}

\autoref{tab:llm_results_frontier} in the main body shows the frontier-proprietary subset that motivates RQ1. \appautoref{tab:llm_results} below is the full master leaderboard across all \NumModels{} model variants $\times$ \numHarborTasks{} tasks ($k{=}3$, $n{=}99$ episodes per row), grouped into frontier proprietary, open-source SOTA, and sub-cliff blocks. Within each block, rows are grouped by family and ordered by SR. The Size and Type / Provider columns record provider/infrastructure metadata; Status (\cmark) marks complete coverage.

\begin{table}[!htbp]
	\centering
	\caption{Master leaderboard: \NumModels{} model variants $\times$ \numHarborTasks{} tasks ($k{=}3$, $n{=}99$). Within each block, rows are grouped by family and ordered by SR. Status: \cmark complete.}
	\label{tab:llm_results}
	\small
	\begin{tabular}{@{}llp{2.6cm}rrrc@{}}
	\toprule
	\textbf{Model}          & \textbf{Size} & \textbf{Type / Provider} & \textbf{Eps} & \textbf{Pass} & \textbf{SR\%} & \textbf{Status}                                                                                                                                                                                                                                                                                                                                                                                                                                                                                                                                                                                              \\
	\midrule
	\multicolumn{7}{@{}l}{\emph{Frontier proprietary}} \\
	\midrule
	Claude Opus 4.6         & N/A           & Instruct / Anthropic     & 99           & 77            & \textbf{77.8} & \cmark                                                                                                                                                                                                                                                                                                                                                                                                                                                                                                                                                                                                       \\
	Claude Sonnet 4.5       & N/A           & Instruct / Anthropic     & 99           & 69            & 69.7          & \cmark                                                                                                                                                                                                                                                                                                                                                                                                                                                                                                                                                                                                       \\
	Gemini 2.5 Pro          & N/A           & Instruct / Google        & 99           & 51            & 51.5          & \cmark                                                                                                                                                                                                                                                                                                                                                                                                                                                                                                                                                                                                       \\
	Gemini 2.5 Flash        & N/A           & Instruct / Google        & 99           & 19            & 19.2          & \cmark                                                                                                                                                                                                                                                                                                                                                                                                                                                                                                                                                                                                       \\
	GPT-5.2                 & closed        & Instruct / OpenAI        & 99           & 74            & 74.7          & \cmark                                                                                                                                                                                                                                                                                                                                                                                                                                                                                                                                                                                                       \\
	GPT-5.1                 & closed        & Instruct / OpenAI        & 99           & 48            & 48.5          & \cmark                                                                                                                                                                                                                                                                                                                                                                                                                                                                                                                                                                                                       \\
	GPT-5                   & closed        & Instruct / OpenAI        & 99           & 70            & 70.7          & \cmark                                                                                                                                                                                                                                                                                                                                                                                                                                                                                                                                                                                                       \\
	GPT-4o                  & closed        & Instruct / OpenAI        & 99           & 23            & 23.2          & \cmark                                                                                                                                                                                                                                                                                                                                                                                                                                                                                                                                                                                                       \\
	\midrule
	\multicolumn{7}{@{}l}{\emph{Open-source SOTA (Instruct/MoE)}}                                                                                                                                                                                                                                                                                                                                                                                                                                                                                                                                                                                                                                                                    \\
	\midrule
	Qwen3.5-397B-A17B (MoE) & 397B/17B      & Instruct / HF            & 99           & 70            & \textbf{70.7} & \cmark                                                                                                                                                                                                                                                                                                                                                                                                                                                                                                                                                                                                       \\
	Qwen3.5-27B             & 27B           & Instruct / HF            & 99           & 64            & 64.6          & \cmark                                                                                                                                                                                                                                          
    
                                                                                                                                                                            \\
    Qwen3.5-9B             & 9B           & Instruct / HF            & 99           & 56            & 56.6          & \cmark                                                                                                                                                                                                                                                                                                                                                                                                                    \\
	Qwen2.5-72B-Instruct    & 72B           & Instruct / HF            & 99           & 23            & 23.2          & \cmark                                                                                                                                                                                                                                                                                                                                                                                                                                                                                                                                                                                                       \\
	MiniMax-M2.5            & N/A           & Instruct / Together      & 99           & 62            & 62.6          & \cmark                                                                                                                                                                                                                                                                                                                                                                                                                                                                                                                                                                                                       \\
	MiniMax-M2.1            & N/A           & Instruct / HF            & 99           & 60            & 60.6          & \cmark                                                                                                                                                                                                                                                                                                                                                                                                                                                                                                                                                                                                       \\
	MiniMax-M2              & N/A           & Instruct / HF            & 99           & 53            & 53.5          & \cmark                                                                                                                                                                                                                                                                                                                                                                                                                                                                                                                                                                                                       \\
	Mistral-Small-4 (MoE)   & 119B/6.5B     & Instruct / OpenRouter    & 99           & 42            & 42.4          & \cmark                                                                                                                                                                                                                                                                                                                                                                                                                                                                                                                                                                                                       \\
	Ministral-3-14B           & 14B           & Instruct / OpenRouter    & 99           & 28            & 28.3          & \cmark                                                                                                                                                                                                                                                                                                                                                                                                                                                                                                                                                                                                       \\
	Ministral-3-8B            & 8B            & Instruct / OpenRouter    & 99           & 19            & 19.2          & \cmark                                                                                                                                                                                                                                                                                                                                                                                                                                                                                                                                                                                                       \\
	Ministral-3-3B            & 3B            & Instruct / OpenRouter    & 99           & 6             & 6.1           & \cmark                                                                

                                                                                                                                                    \\
	Gemma-4-31B     & 31B           & Instruct / OpenRouter    & 99           & 37            & 37.4          & \cmark

                                                                                                                                                    \\
	Gemma-4-26B-A4B (MoE)   & 26B/4B        & Instruct / OpenRouter    & 99           & 38            & 38.4          & \cmark                                                                                                                                                                                                                                                                                                                                                                                                                                                                                                                                                                                                       \\
	Gemma-4-E4B             & 4B (eff.)     & Instruct / Ollama        & 99           & 17             & 17.2          & \cmark                                                                                                                                                                                                                                                                                                                                                                                                                                                                                                \\
	Gemma-4-E2B             & 2B (eff.)     & Instruct / Ollama        & 99           & 7             & 7.1           & \cmark                                                                                                                                                                                                                                                                                                                                                                                                                                   \\
	Llama-3.3-70B-Instruct  & 70B           & Instruct / Together      & 99           & 7             & 7.1           & \cmark                                                                                                                                                                                                                                                                                                                                                                                                                                                                                                                                                                                                       \\
	\midrule
	\multicolumn{7}{@{}l}{\emph{Sub-cliff (no contract-level emergence; all $\sim 0\%$)}}                                                                                                                                                                                                                                                                                                                                                                                                                                                                                                                                                                                                                                            \\
	\midrule
	Qwen3.5-9B              & 9B            & Base / vllm              & 99           & 0             & 0.0           & \cmark                                                                                                                                                                                                                                                                                                                                                                                                                                                                                                                                                                                                       \\
	Qwen3.5-4B (Instruct)   & 4B            & Instruct / vllm          & 99           & 0             & 0.0           & \cmark                                                                                                                                                                                                                                                                                                                                                                                                                                                                                                                                                                                                       \\
	Qwen3.5-4B (Base)       & 4B            & Base / vllm              & 99           & 0             & 0.0           & \cmark                                                                                                                                                                                                                                                                                                                                                                                                                                                                                                                                                                                                       \\
	Qwen2.5-72B (Base)      & 72B           & Base / Featherless       & 99           & 0             & 0.0           & \cmark                                                                                                                                                                                                                                                                                                                                                                                                                                                                                                                                                                                                       \\
	Qwen2.5-32B (Instruct)  & 32B           & Instruct / Featherless   & 99           & 0             & 0.0           & \cmark                                                                                                                                                                                                                                                                                                                                                                                                                                                                                                                                                                                                       \\
	Qwen2.5-32B (Base)      & 32B           & Base / Featherless       & 99           & 0             & 0.0           & \cmark                                                                                                                                                                                                                                                                                                                                                                                                                                                                                                                                                                                                       \\
	Qwen2.5-7B (Instruct)   & 7B            & Instruct / Featherless   & 99           & 0             & 0.0           & \cmark                                                                                                                                                                                                                                                                                                                                                                                                                                                                                                                                                                                                       \\
	Qwen2.5-7B (Base)       & 7B            & Base / Featherless       & 99           & 0             & 0.0           & \cmark                                                                                                                                                                                                                                                                                                                                                                                                                                                                                                                                                                                                       \\
	Qwen2.5-1.5B (Instruct) & 1.5B          & Instruct / Featherless   & 99           & 0             & 0.0           & \cmark                                                                                                                                                                                                                                                                                                                                                                                                                                                                                                                                                                                                       \\
	Qwen2.5-1.5B (Base)     & 1.5B          & Base / Featherless       & 99           & 0             & 0.0           & \cmark                                                                                                                                                                                                                                                                                                                                                                                                                                                                                                                                                                                                       \\
	Phi-4 (vllm)            & 14B           & Base / vllm              & 99           & 0             & 0.0           & \cmark                                                                                                                                                                                                                                                                                                                                                                                                                                                                                                                                                                                                       \\
	DeepSeek-R1 (Thinking)  & 671B/37B      & Thinking / Together      & 99           & 0             & 0.0           & \cmark                                                                                                                                                                                                                                                                                                                                                                                                                                                                                                                                                                                                       \\
	Ministral-3-14B (Base)   & 14B            & Base / vllm              & 99           & 0             & 0.0           & \cmark                                                                                                                                                                                                                                                                                                                                                                                                          \\
	Ministral-3-8B (Base)  & 8B           & Base / vllm              & 99           & 0             & 0.0           & \cmark                                                                                                                                                                                                                                                                                                                                                                                                                                                                                                                                                                                                       \\
	Ministral-3-3B (Base)  & 3B           & Base / vllm              & 99           & 0             & 0.0           & \cmark                                                                                                                                                                                                                                                                                                                                                                                                                                                             \\
	\bottomrule
\end{tabular}

\end{table}

\section{Base vs.\ Instruct: Post-Training Lifts the Floor}
\label{app:base_instruct}

\autoref{sec:cliff} reports the within-family scaling cliff along the parameter axis of the Qwen 3.5 \emph{Instruct} row. The orthogonal Base$\to$Instruct axis of \autoref{fig:scaling} carries a separate, sharper signal that we summarize here for completeness.

Every Base variant we evaluated --- Qwen 3.5 (4\,B, 9\,B), Qwen 2.5 (1.5\,B, 7\,B, 32\,B, 72\,B), Ministral-3 (3\,B, 8\,B, 14\,B), Phi-4 14\,B --- scores 0\,\% on the full 33-task suite. Base models lack a chat template and tool-call format, so on a benchmark whose every task requires tool calls they emit ungrounded text and the validator records 100\,\% \texttt{OTHER}. The contrast with the Instruct row is the cleanest evidence in this paper that contract compliance is a property of \emph{post-training}, not parameter count: at fixed parameters, reading down a column of \autoref{fig:scaling} isolates the Base$\to$Instruct delta with everything else held constant. (Qwen only publishes Base checkpoints up to 9\,B; the 27\,B and 397\,B-A17\,B variants are Instruct-only.) The complementary post-training observation --- that helpfulness-oriented updates can \emph{erode} compliance at constant parameters --- is the V-shape regression of \autoref{sec:version_regression}.

\section{V-Shape Regression: Detailed Decomposition}
\label{app:v_shape_details}

\autoref{sec:version_regression} reports a structured, locatable post-training regression in the GPT-5 family. Here we record the numbers that back that claim. Per-family decomposition (\autoref{fig:v_shape_b}) shows integrity collapses hardest from GPT-5 to GPT-5.1 ($0.80 \to 0.47$, $-0.33$), versus $-0.25$ on validity and $-0.20$ on hybrid. At the label level (\autoref{fig:v_shape_c}), GPT-5.1's 51 failures are dominated by \texttt{WRONG\_VALUE} (21) plus \texttt{MUTATED\_TOKEN} (4) --- byte-level integrity labels that had near-disappeared in GPT-5. GPT-5.2's residual failures shift back toward temporal labels rather than reversing GPT-5.1's integrity surge, which is why the recovery is described in the main body as a different point in the failure-mode space rather than a re-traversal of GPT-5's path.

\section{Per-Task Heatmaps (All 33 Tasks)}
\label{app:heatmaps}

\appautoref{fig:heatmap_frontier} and \appautoref{fig:heatmap_opensource} give the per-task pass-rate heatmaps referenced in \autoref{sec:labels}. Cells are mean reward across $k{=}3$ runs; tasks are family-grouped (Validity / Integrity / Hybrid). Sub-cliff models that score $0$ on every task (DeepSeek-R1, Phi-4, Qwen3.5-9B/4B base, Qwen2.5-72B base) are omitted from \appautoref{fig:heatmap_opensource} for visual clarity --- their leaderboard rows remain in \appautoref{tab:llm_results}.

\paragraph{Three universal frontiers.}
The heatmaps expose three frontiers that no model crosses:
\begin{enumerate*}[label=(\arabic*)]
	\item \textbf{universal floor} --- \texttt{multi-turn-recall} (preserving an 8\,192-byte URL across conversation history) is $0.00$ for every model, an architectural limit on context-window byte fidelity;
	\item \textbf{intra-family inverse scaling} --- on \texttt{api-rate-limit-patience} the most-capable models do \emph{worse} (GPT-4o $0.80$ $\to$ GPT-5.1/5.2 $0.00$; only Claude Opus 4.6 clears it), because aggressive retry exhausts rate-limit quotas instead of waiting;
	\item \textbf{universal ceiling} --- no model exceeds $77.8\,\%$ (Claude Opus 4.6) on the full suite.
\end{enumerate*}
Frontiers (1) and (2) motivate the architectural-intervention pointer in \autoref{find:label_coaching}: the failures they describe are exactly the ones in-context label coaching does not recover.

\begin{figure}[!htbp]
	\centering
	\includegraphics[width=\textwidth]{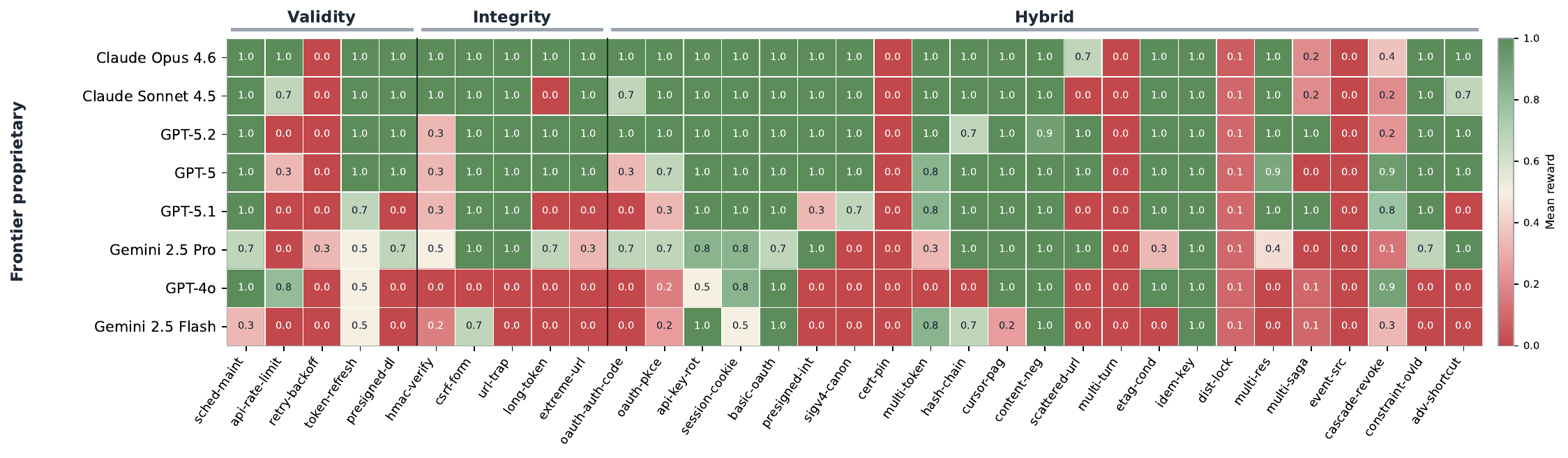}
	\caption{\textbf{Per-task pass-rate heatmap: frontier proprietary models $\times$ 33 tasks.} The Hybrid block reveals the universally-hard tasks (\eg, \texttt{multi-turn-recall} is red across every row).}
	\label{fig:heatmap_frontier}
\end{figure}

\begin{figure}[!htbp]
	\centering
	\includegraphics[width=\textwidth]{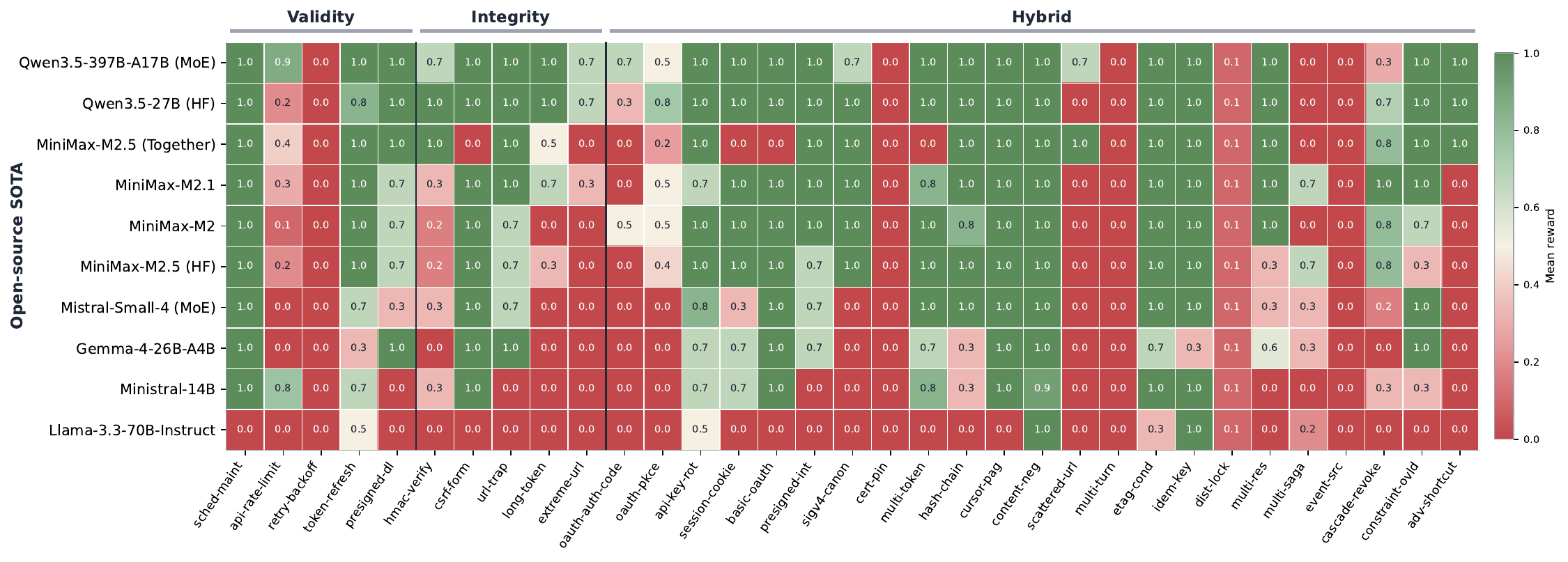}
	\caption{\textbf{Per-task pass-rate heatmap: open-source SOTA $\times$ 33 tasks.} Same color scale and family grouping as \appautoref{fig:heatmap_frontier}.}
	\label{fig:heatmap_opensource}
\end{figure}

\section{Per-Task Breakdown (Numerical)}
\label{app:per_task}

\appautoref{tab:per_task_frontier} and \appautoref{tab:per_task_opensource} give the exact numerical per-task pass rates that the heatmaps in \appautoref{fig:heatmap_frontier} and \appautoref{fig:heatmap_opensource} visualize. Each cell is the mean reward across $k{=}3$ runs of one (model, task) pair on the full \numHarborTasks-task suite; \textbf{bold} marks the per-task best within each panel. Models without full $k{=}3$ coverage on every task at the time of writing are omitted (Mistral / Ministral, MiniMax-M2 / M2.1, Qwen 2.5 family, Gemini 2.5 Flash); their leaderboard rows remain in \appautoref{tab:llm_results}. The bottom row of each table is the unweighted 33-task mean per model.

\begin{table}[!htbp]
	\centering
	\caption{Per-task pass rate, frontier proprietary models ($k{=}3$, full \numHarborTasks-task suite).}
	\label{tab:per_task_frontier}
		\footnotesize
		\begin{tabular}{@{}ll|ccccccc@{}}
	\toprule
	\textbf{Family} & \textbf{Task} & \textbf{GPT-4o} & \textbf{GPT-5} & \textbf{GPT-5.1} & \textbf{GPT-5.2} & \textbf{Op4.6} & \textbf{So4.5} & \textbf{G2.5Pro} \\
	\midrule
	\multirow{4}{*}{\rotatebox{90}{Validity}} & \texttt{api-rate-limit} & 0.33 & 0.33 & 0.00 & 0.00 & \textbf{1.00} & 0.67 & 0.00 \\
	 & \texttt{retry-backoff} & 0.00 & 0.00 & 0.00 & 0.00 & 0.00 & 0.00 & \textbf{0.33} \\
	 & \texttt{sched-maint} & \textbf{1.00} & \textbf{1.00} & \textbf{1.00} & \textbf{1.00} & \textbf{1.00} & \textbf{1.00} & 0.67 \\
	 & \texttt{token-refresh} & 0.00 & \textbf{1.00} & 0.33 & \textbf{1.00} & \textbf{1.00} & \textbf{1.00} & 0.33 \\
	\midrule
	\multirow{5}{*}{\rotatebox{90}{Integrity}} & \texttt{csrf-form} & 0.00 & \textbf{1.00} & \textbf{1.00} & \textbf{1.00} & \textbf{1.00} & \textbf{1.00} & \textbf{1.00} \\
	 & \texttt{extreme-url} & 0.00 & \textbf{1.00} & 0.00 & \textbf{1.00} & \textbf{1.00} & \textbf{1.00} & 0.33 \\
	 & \texttt{long-token} & 0.00 & \textbf{1.00} & 0.00 & \textbf{1.00} & \textbf{1.00} & 0.00 & 0.67 \\
	 & \texttt{url-trap} & 0.00 & \textbf{1.00} & \textbf{1.00} & \textbf{1.00} & \textbf{1.00} & \textbf{1.00} & \textbf{1.00} \\
	 & \texttt{webhook-hmac} & 0.00 & 0.00 & 0.33 & 0.33 & \textbf{1.00} & \textbf{1.00} & 0.33 \\
	\midrule
	\multirow{24}{*}{\rotatebox{90}{Hybrid}} & \texttt{adv-shortcut} & 0.00 & \textbf{1.00} & 0.00 & \textbf{1.00} & \textbf{1.00} & 0.67 & \textbf{1.00} \\
	 & \texttt{api-key-rot} & 0.00 & \textbf{1.00} & \textbf{1.00} & \textbf{1.00} & \textbf{1.00} & \textbf{1.00} & 0.67 \\
	 & \texttt{oauth-basic} & \textbf{1.00} & \textbf{1.00} & \textbf{1.00} & \textbf{1.00} & \textbf{1.00} & \textbf{1.00} & 0.67 \\
	 & \texttt{cascade-revoke} & \textbf{0.67} & \textbf{0.67} & 0.00 & 0.00 & 0.00 & 0.00 & 0.00 \\
	 & \texttt{cert-pinning} & 0.00 & 0.00 & 0.00 & 0.00 & 0.00 & 0.00 & 0.00 \\
	 & \texttt{constr-overload} & 0.00 & \textbf{1.00} & \textbf{1.00} & \textbf{1.00} & \textbf{1.00} & \textbf{1.00} & 0.67 \\
	 & \texttt{content-neg} & \textbf{1.00} & \textbf{1.00} & \textbf{1.00} & 0.67 & \textbf{1.00} & \textbf{1.00} & \textbf{1.00} \\
	 & \texttt{hash-chain} & 0.00 & \textbf{1.00} & \textbf{1.00} & 0.67 & \textbf{1.00} & \textbf{1.00} & \textbf{1.00} \\
	 & \texttt{cursor-pag} & \textbf{1.00} & \textbf{1.00} & \textbf{1.00} & \textbf{1.00} & \textbf{1.00} & \textbf{1.00} & \textbf{1.00} \\
	 & \texttt{dist-lock} & 0.00 & 0.00 & 0.00 & 0.00 & 0.00 & 0.00 & 0.00 \\
	 & \texttt{etag-cond} & \textbf{1.00} & \textbf{1.00} & \textbf{1.00} & \textbf{1.00} & \textbf{1.00} & \textbf{1.00} & 0.33 \\
	 & \texttt{event-source} & 0.00 & 0.00 & 0.00 & 0.00 & 0.00 & 0.00 & 0.00 \\
	 & \texttt{idempotency} & \textbf{1.00} & \textbf{1.00} & \textbf{1.00} & \textbf{1.00} & \textbf{1.00} & \textbf{1.00} & \textbf{1.00} \\
	 & \texttt{multi-resource} & 0.00 & 0.67 & \textbf{1.00} & \textbf{1.00} & \textbf{1.00} & \textbf{1.00} & 0.00 \\
	 & \texttt{saga} & 0.00 & 0.00 & \textbf{1.00} & \textbf{1.00} & 0.00 & 0.00 & 0.00 \\
	 & \texttt{multi-token} & 0.00 & 0.67 & 0.67 & \textbf{1.00} & \textbf{1.00} & \textbf{1.00} & 0.33 \\
	 & \texttt{multi-turn-rec} & 0.00 & 0.00 & 0.00 & 0.00 & 0.00 & 0.00 & 0.00 \\
	 & \texttt{oauth-code} & 0.00 & 0.33 & 0.00 & \textbf{1.00} & \textbf{1.00} & 0.67 & 0.67 \\
	 & \texttt{oauth-pkce} & 0.00 & 0.67 & 0.00 & \textbf{1.00} & \textbf{1.00} & \textbf{1.00} & 0.67 \\
	 & \texttt{presigned-dl} & 0.00 & \textbf{1.00} & 0.00 & \textbf{1.00} & \textbf{1.00} & \textbf{1.00} & 0.67 \\
	 & \texttt{presigned-int} & 0.00 & \textbf{1.00} & 0.33 & \textbf{1.00} & \textbf{1.00} & \textbf{1.00} & \textbf{1.00} \\
	 & \texttt{scattered-url} & 0.00 & \textbf{1.00} & 0.00 & \textbf{1.00} & 0.67 & 0.00 & \textbf{1.00} \\
	 & \texttt{session-cookie} & 0.67 & \textbf{1.00} & \textbf{1.00} & \textbf{1.00} & \textbf{1.00} & \textbf{1.00} & 0.67 \\
	 & \texttt{sig-canonical} & 0.00 & \textbf{1.00} & 0.33 & \textbf{1.00} & \textbf{1.00} & \textbf{1.00} & 0.00 \\
	\midrule
	\multicolumn{2}{@{}l|}{\textbf{33-task mean}} & 0.23 & 0.71 & 0.48 & 0.75 & \textbf{0.78} & 0.70 & 0.52 \\
	\bottomrule
\end{tabular}

\end{table}

\begin{table}[!htbp]
	\centering
	\caption{Per-task pass rate, open-source SOTA ($k{=}3$, full \numHarborTasks-task suite). Column abbreviations: Q3.5-397B = Qwen3.5-397B-A17B, Q3.5-27B = Qwen3.5-27B, M2.5 = MiniMax-M2.5, Gem4-26B = Gemma-4-26B-A4B, Llama70B = Llama-3.3-70B-Instruct, DS-R1 = DeepSeek-R1.}
	\label{tab:per_task_opensource}
		\footnotesize
		\begin{tabular}{@{}ll|ccccccc@{}}
	\toprule
	\textbf{Family} & \textbf{Task} & \textbf{Q3.5-397B} & \textbf{Q3.5-27B} & \textbf{M2.5} & \textbf{Gem4-26B} & \textbf{Llama70B} & \textbf{DS-R1} & \textbf{Phi-4} \\
	\midrule
	\multirow{4}{*}{\rotatebox{90}{Validity}} & \texttt{api-rate-limit} & \textbf{0.67} & 0.00 & 0.00 & 0.00 & 0.00 & 0.00 & 0.00 \\
	 & \texttt{retry-backoff} & 0.00 & 0.00 & 0.00 & 0.00 & 0.00 & 0.00 & 0.00 \\
	 & \texttt{sched-maint} & \textbf{1.00} & \textbf{1.00} & \textbf{1.00} & \textbf{1.00} & 0.00 & 0.00 & 0.00 \\
	 & \texttt{token-refresh} & \textbf{1.00} & 0.67 & \textbf{1.00} & 0.00 & 0.00 & 0.00 & 0.00 \\
	\midrule
	\multirow{5}{*}{\rotatebox{90}{Integrity}} & \texttt{csrf-form} & \textbf{1.00} & \textbf{1.00} & 0.67 & \textbf{1.00} & 0.00 & 0.00 & 0.00 \\
	 & \texttt{extreme-url} & \textbf{0.67} & \textbf{0.67} & \textbf{0.67} & 0.00 & 0.00 & 0.00 & 0.00 \\
	 & \texttt{long-token} & \textbf{1.00} & 0.00 & 0.33 & 0.00 & 0.00 & 0.00 & 0.00 \\
	 & \texttt{url-trap} & \textbf{1.00} & \textbf{1.00} & \textbf{1.00} & \textbf{1.00} & 0.00 & 0.00 & 0.00 \\
	 & \texttt{webhook-hmac} & 0.67 & \textbf{1.00} & \textbf{1.00} & 0.00 & 0.00 & 0.00 & 0.00 \\
	\midrule
	\multirow{24}{*}{\rotatebox{90}{Hybrid}} & \texttt{adv-shortcut} & \textbf{1.00} & \textbf{1.00} & 0.67 & 0.00 & 0.00 & 0.00 & 0.00 \\
	 & \texttt{api-key-rot} & \textbf{1.00} & \textbf{1.00} & \textbf{1.00} & 0.67 & 0.00 & 0.00 & 0.00 \\
	 & \texttt{oauth-basic} & \textbf{1.00} & \textbf{1.00} & \textbf{1.00} & \textbf{1.00} & 0.00 & 0.00 & 0.00 \\
	 & \texttt{cascade-revoke} & 0.00 & 0.00 & 0.00 & 0.00 & 0.00 & 0.00 & 0.00 \\
	 & \texttt{cert-pinning} & 0.00 & 0.00 & 0.00 & 0.00 & 0.00 & 0.00 & 0.00 \\
	 & \texttt{constr-overload} & \textbf{1.00} & \textbf{1.00} & \textbf{1.00} & \textbf{1.00} & 0.00 & 0.00 & 0.00 \\
	 & \texttt{content-neg} & \textbf{1.00} & \textbf{1.00} & \textbf{1.00} & \textbf{1.00} & \textbf{1.00} & 0.00 & 0.00 \\
	 & \texttt{hash-chain} & \textbf{1.00} & \textbf{1.00} & \textbf{1.00} & 0.33 & 0.00 & 0.00 & 0.00 \\
	 & \texttt{cursor-pag} & \textbf{1.00} & \textbf{1.00} & \textbf{1.00} & \textbf{1.00} & 0.00 & 0.00 & 0.00 \\
	 & \texttt{dist-lock} & 0.00 & 0.00 & 0.00 & 0.00 & 0.00 & 0.00 & 0.00 \\
	 & \texttt{etag-cond} & \textbf{1.00} & \textbf{1.00} & \textbf{1.00} & 0.67 & 0.33 & 0.00 & 0.00 \\
	 & \texttt{event-source} & 0.00 & 0.00 & 0.00 & 0.00 & 0.00 & 0.00 & 0.00 \\
	 & \texttt{idempotency} & \textbf{1.00} & \textbf{1.00} & \textbf{1.00} & 0.33 & \textbf{1.00} & 0.00 & 0.00 \\
	 & \texttt{multi-resource} & \textbf{1.00} & \textbf{1.00} & \textbf{1.00} & 0.33 & 0.00 & 0.00 & 0.00 \\
	 & \texttt{saga} & 0.00 & 0.00 & 0.00 & \textbf{0.33} & 0.00 & 0.00 & 0.00 \\
	 & \texttt{multi-token} & \textbf{1.00} & \textbf{1.00} & 0.33 & 0.67 & 0.00 & 0.00 & 0.00 \\
	 & \texttt{multi-turn-rec} & 0.00 & 0.00 & 0.00 & 0.00 & 0.00 & 0.00 & 0.00 \\
	 & \texttt{oauth-code} & \textbf{0.67} & 0.33 & 0.00 & 0.00 & 0.00 & 0.00 & 0.00 \\
	 & \texttt{oauth-pkce} & 0.33 & \textbf{0.67} & 0.00 & 0.00 & 0.00 & 0.00 & 0.00 \\
	 & \texttt{presigned-dl} & \textbf{1.00} & \textbf{1.00} & \textbf{1.00} & \textbf{1.00} & 0.00 & 0.00 & 0.00 \\
	 & \texttt{presigned-int} & \textbf{1.00} & \textbf{1.00} & \textbf{1.00} & 0.67 & 0.00 & 0.00 & 0.00 \\
	 & \texttt{scattered-url} & 0.67 & 0.00 & \textbf{1.00} & 0.00 & 0.00 & 0.00 & 0.00 \\
	 & \texttt{session-cookie} & \textbf{1.00} & 0.00 & \textbf{1.00} & 0.67 & 0.00 & 0.00 & 0.00 \\
	 & \texttt{sig-canonical} & 0.67 & \textbf{1.00} & \textbf{1.00} & 0.00 & 0.00 & 0.00 & 0.00 \\
	\midrule
	\multicolumn{2}{@{}l|}{\textbf{33-task mean}} & \textbf{0.71} & 0.62 & 0.63 & 0.38 & 0.07 & 0.00 & 0.00 \\
	\bottomrule
\end{tabular}

\end{table}

\section{Per-Label Retry Breakdown}
\label{app:per_label_retry}

The $+7.1$\,pp paired gap reported in \autoref{tab:dose_response} (main body \autoref{sec:labels}) is not uniform across failure labels. \appautoref{tab:per_label_retry} below decomposes the 42 paired GPT-5.1 episodes by the failure label assigned at first attempt, showing where retry-time coaching helps and where it does not. The headline asymmetry --- recovery concentrates in integrity-style labels (\texttt{WRONG\_VALUE}, \texttt{MISSING\_CONSTRAINT}); temporal labels are unresponsive --- and the \texttt{RATE\_LIMITED} inversion are discussed in the main body.

\begin{table}[!htbp]
	\centering\small
	\caption{Per-label retry success on the $n{=}42$ paired GPT-5.1 failures. Cells show pass-count under each retry condition; the right column gives the $\Delta$ between correct- and wrong-label coaching.}
	\label{tab:per_label_retry}
	\begin{tabular}{@{}lrrrrr@{}}
		\toprule
		\textbf{Failure label}        & \textbf{$n$} & \textbf{Naive} & \textbf{Correct-label} & \textbf{Wrong-label} & \textbf{$\Delta$ (correct $-$ wrong)} \\
		\midrule
		\texttt{WRONG\_VALUE}         & 19           & 3              & \textbf{4}             & 2                    & $+2$                                  \\
		\texttt{MISSING\_CONSTRAINT}  & 1            & 0              & \textbf{1}             & 0                    & $+1$                                  \\
		\texttt{EXPIRED\_BEFORE\_USE} & 9            & 2              & 2                      & 1                    & $+1$                                  \\
		\texttt{RATE\_LIMITED}        & 1            & 0              & 0                      & \textbf{1}           & $-1$ (inversion)                      \\
		other (rare labels)           & 12           & 1              & 1                      & 1                    & $\phantom{-}0$                        \\
		\midrule
		\textbf{Total}                & 42           & 6              & \textbf{8}             & 5                    & $+3$ pass-count, $+7.1$\,pp rate      \\
		\bottomrule
	\end{tabular}
\end{table}

\section{Full Failure-Mode Profiles (All Models)}
\label{app:failure_dist_full}

\appautoref{fig:failure_dist} arranges seven complete-coverage models from \appautoref{tab:llm_results} as a capability-ladder pie grid (ordered by SR). Phi-4 14\,B (sub-cliff) fails entirely with \texttt{OTHER} --- a \emph{pre-contract failure} where the agent gives up before reaching any contract predicate, not a taxonomy coverage gap. From there the dominant failure type shifts predictably with capability: integrity (\texttt{WRONG\_VALUE}, red) dominates mid-tier and cliff-emerging models, then temporal (\texttt{EXPIRED\_BEFORE\_USE}, blue) takes over at the frontier (Sonnet 4.5, Opus 4.6); the open-source MoE ceiling (Qwen3.5-397B-A17B) sits at the same transition point. GPT-4o's red-dominant profile is exactly what GPT-5.1 regresses toward in \autoref{sec:version_regression}, motivating the GPT-5.1 test bed used in \autoref{sec:labels}.

\begin{figure}[!htbp]
	\centering
	\includegraphics[width=\textwidth]{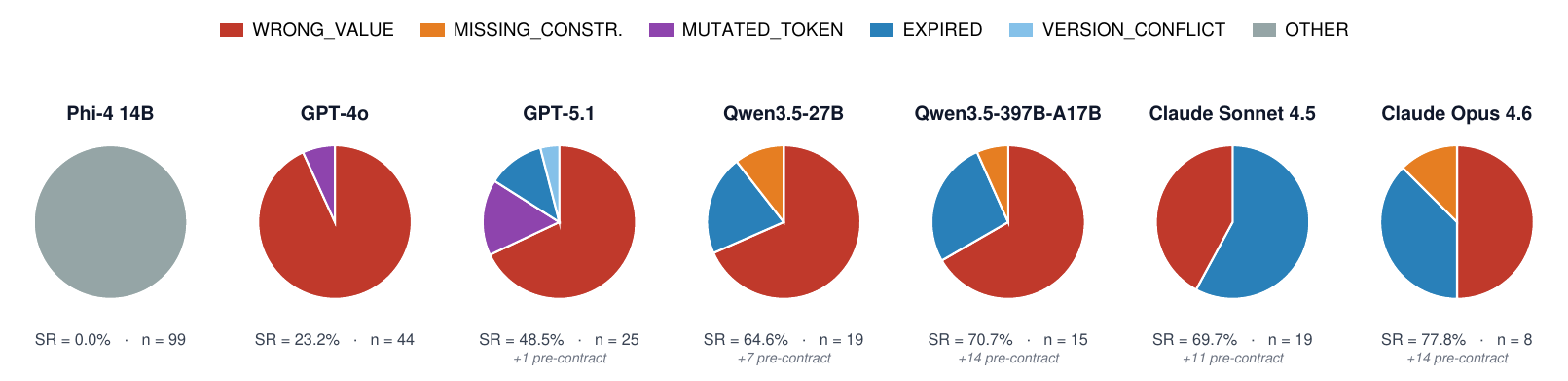}
	\caption{Capability ladder of failure-mode profiles, ordered by SR. The dominant failure type shifts gray (sub-cliff \texttt{OTHER}) $\to$ red (mid-tier integrity) $\to$ blue (frontier temporal).}
	\label{fig:failure_dist}
\end{figure}

\appautoref{fig:failure_dist_full} below complements the pie-grid view with the full per-model stacked-bar profile across all 14 models with $\geq 60$\,episode coverage in our cohort, including the full GPT-5 series (4o / 5 / 5.1 / 5.2) and the open-source SOTA tier. The stacked-bar view exposes per-label proportions as horizontal bands that are easy to scan across models (e.g., the \texttt{WRONG\_VALUE} band visibly grows in GPT-5.1 vs.\ GPT-5, then shrinks in GPT-5.2 --- the V-shape of \autoref{sec:version_regression}). The pie grid sacrifices that horizontal-scan property in exchange for showing the dominant-failure-type progression along the capability axis as a single Gestalt; the two figures answer different questions on the same underlying data.

\begin{figure}[!htbp]
	\centering
	\includegraphics[width=\textwidth]{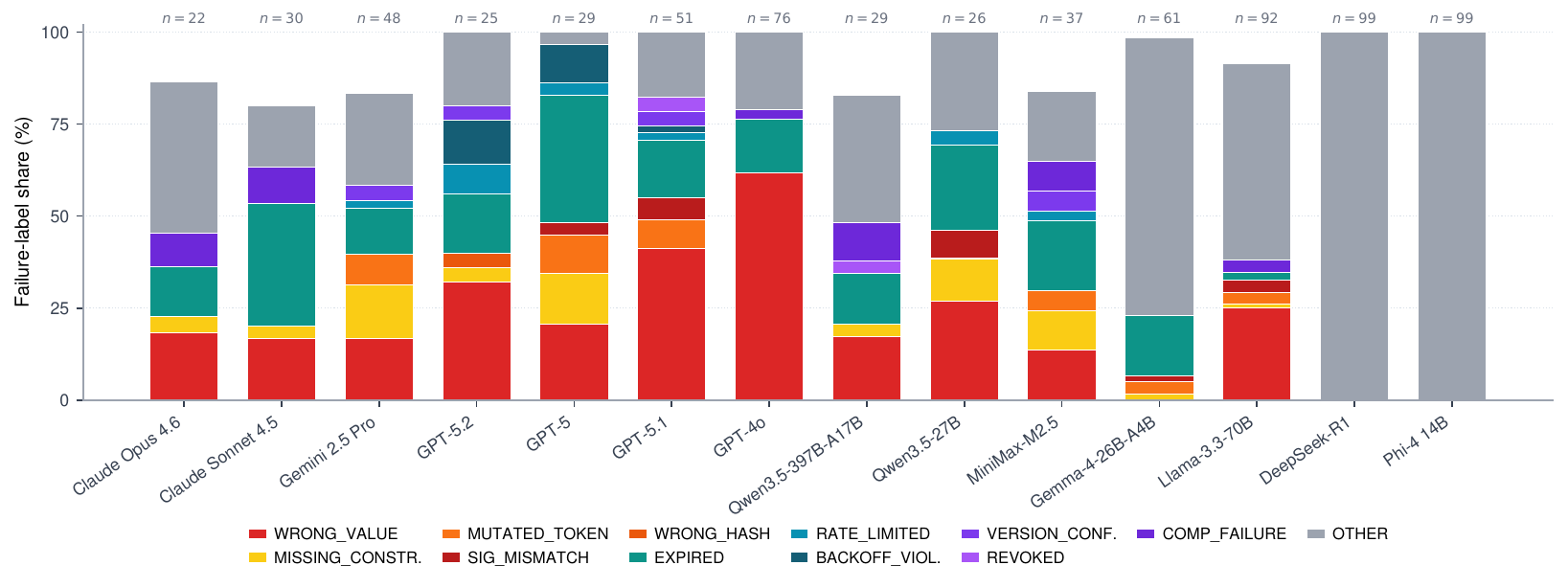}
	\caption{Full per-model failure-label profile (stacked bars, all 15 complete- or near-complete-coverage models). Each bar's segments sum to the model's total failed-episode count (printed as $n{=}\cdots$ above the bar); segment heights are the proportion of failed episodes attributable to each label category. Companion view to the capability-ladder pie grid in \appautoref{fig:failure_dist}.}
	\label{fig:failure_dist_full}
\end{figure}


\end{document}